\newcommand{\m}[1]{\mathsf{#1}}
\newcommand{\mc}[1]{\mathcal{#1}}
\newcommand{\obar}[1]{\makebox[0pt]{$\phantom{#1}\overline{\phantom{#1}}$}#1}
\renewcommand{\vec}[1]{\obar{#1}} 
\newcommand{\inn}{\,{\in}\,} 
\newcommand{\TT}{\mc T} 
\newcommand{\LL}{\mc L} 
\newcommand{\CC}{\mc C} 
\newcommand{\BB}{\mc B} 
\newcommand{\NN}{\mc N} 
\renewcommand{\SS}{\mc S} 
\newcommand{\PP}{\mc P} 
\newcommand{\FF}{\mc F} 
\newcommand{\eval}[1]{[#1]_{\alpha}^M} 
\newcommand{\goto}[1]{\mathrel{\raisebox{-2pt}{$\xrightarrow{#1}$}}}
\newcommand{\update}{\mathit{update}} 
\newcommand{\constr}{\mathit{cstr}} 
\newcommand{\trans}[1]{\widehat{#1}}
\newcommand{\phiinit}{\phi_{I}}
\newcommand{\NFA}[1][\psi]{{\mc N}_{#1}}
\newcommand{\U}{\mathrel{\mathsf{U}}} 
\newcommand{\G}{\mathsf{G}\xspace}
\newcommand{\F}{\mathsf{F}\xspace}
\newcommand{\X}{\mathsf{X}\xspace}
\newcommand{\hist}{H_\exists}
\newcommand{\writ}{\mathit{write}}
\newcommand{\inquotes}[1]{#1}
\newcommand{\last}{\lambda}
\newcommand{\eqn}{\,{=}\,}
\newcommand{\ovee}{\mathrel{\tikz{\node[circle, inner sep=0pt, scale=0.7,draw,line width=0.01mm]{$\vee$}}}}
\newcommand{\owedge}{\mathrel{\tikz{\node[circle, inner sep=0pt, scale=0.7,draw,line width=0.01mm]{$\wedge$}}}}
\newcommand{\LTLf}{LTL$_f$\xspace}
\renewcommand{\phi}{\varphi}
\newcommand{\mydds}[1][$\Sigma$]{{#1}-DMT\xspace}
\newcommand{\plainmydds}{DMT\xspace}
\newcommand{\DBDDS}{DB$_\Sigma$-DMT\xspace}
\newtheorem{theorem}{Theorem}
\newtheorem{remark}[theorem]{Remark}
\newtheorem{example}[theorem]{Example}
\theoremstyle{definition}
\newtheorem{definition}[theorem]{Definition}
\newtheorem{lemma}[theorem]{Lemma}
\newtheorem{corollary}[theorem]{Corollary}
\declaretheorem[numbered=no, name={Assumptions ($\star$)}]{assumptions}
\newcommand{\lemref}[1]{Lem.~\ref{lem:#1}}
\newcommand{\defref}[1]{Def.~\ref{def:#1}}
\newcommand{\corref}[1]{Cor.~\ref{cor:#1}}
\newcommand{\propref}[1]{Prop.~\ref{prop:#1}}
\newcommand{\thmref}[1]{Thm.~\ref{thm:#1}}
\newcommand{\remref}[1]{Rem.~\ref{rem:#1}}
\newcommand{\exaref}[1]{Ex.~\ref{exa:#1}}
\newcommand{\exasref}[2]{Exs.~\ref{exa:#1} and \ref{exa:#2}}
\tikzstyle{state}=[draw, circle, inner sep=1.5pt, line width=.7pt, scale=.6]
\tikzstyle{edge}=[draw, ->, line width=.5pt]
\tikzstyle{action}=[scale=.7]
\tikzstyle{caption}=[scale=.9]
\newcommand{\pcnoder}[2]{{#1}\nodepart{two}{#2}}
\newcommand{\dmttuple}{\langle \Sigma, V, I, T\rangle}
\newcommand{\theproperty}{finite data history\xspace}
\newcommand{\historyset}{data history\xspace}
\title{Linear-Time Verification of Data-Aware
Processes Modulo Theories\\ via Covers and Automata (Extended Version)}
\author {
    Alessandro Gianola\textsuperscript{\rm 1},
    Marco Montali\textsuperscript{\rm 2},
   Sarah Winkler\textsuperscript{\rm 2}
}
\begin{document}

\maketitle

\begin{abstract}
The need to model and analyse dynamic systems operating over complex data is ubiquitous in AI and neighboring areas, in particular business process management. Analysing such data-aware systems is a notoriously difficult problem, as they are intrinsically infinite-state. Existing approaches work for specific datatypes, and/or limit themselves to the verification of safety properties.
In this paper, we lift both such limitations, studying for the first time linear-time verification for so-called \emph{data-aware processes modulo theories} (DMTs), from the foundational and practical point of view. The DMT model is very general, as it supports processes operating over variables that can store arbitrary types of data, ranging over infinite domains and equipped with domain-specific predicates. 
%
Specifically, we provide four contributions. First, we devise a semi-decision procedure for linear-time verification of DMTs, which works for a very large class of datatypes obeying to mild model-theoretic assumptions. The procedure relies on a unique combination of automata-theoretic and cover computation techniques to respectively deal with linear-time properties and datatypes.
Second, we identify an abstract, semantic property that guarantees the existence of a faithful finite-state abstraction of the original system, and show that our method becomes a decision procedure in this case.
Third, we identify concrete, checkable classes of systems that satisfy this property, generalising several results in the literature. Finally, we present an implementation and a first experimental evaluation.

\end{abstract}

\newcommand{\ddmt}{DDSMT\xspace}


\section{Introduction}

In many application domains of AI, the evolution of dynamic systems is inextricably intertwined with the progression of some form of data. This calls for representing the states of the system with richer structures than propositional assignments, making the system intrinsically infinite-state. Notable examples of such \emph{data-aware processes} are 
(Situation Calculus) action theories~\cite{BaralG15,DeLP16},
lifted planning and planning over relational states/ontologies~\cite{FrancesG16a,CMPS16,BKKN22},
dynamic systems operating over databases~\cite{CGM13,DLV19,DHLV18,Gianola2023} and lightweight ontologies \cite{BCMD13,CGMM23}, and 
work processes in business process management (BPM)~\cite{Reichert12,CGGMR19-bpm}. In these settings, verification is especially important due to the data-process interplay, but highly challenging due to infinity of the state space.

When considering, within this large spectrum, those approaches that come with foundational results paired with effective algorithmic techniques and implementations, we observe that they either:
\begin{inparaenum}[\itshape (i)] 
\item operate on arbitrary datatypes while limiting verification to safety/reachability properties \cite{CGGMR20,GGMR23}, 
\item handle 
general
linear-time properties but only for specific datatypes, such as relational structures \cite{verifas}, or numerical variables~\cite{DD07,FMW22a},
\item restrict both the model and the verification formalism \cite{DFGM17}. 
\end{inparaenum}

In this paper, we tackle such limitations and study linear-time verification for so-called \emph{data-aware processes modulo theories} (\plainmydds{s}), from the foundational and practical point of view. Our verification machinery is very general: \plainmydds{s} capture dynamic systems manipulating variables that can store arbitrary types of data, ranging over infinite domains and equipped with domain-specific predicates, only imposing very mild assumptions on their underlying theory. 
In this respect, \plainmydds{s} subsume several models studied in the literature for which very little is known regarding decidability of linear-time verification. Among them, particularly relevant and investigated are those where variables can store numeric values subject to linear arithmetic operations, and/or objects extracted from a read-only relational database or lightweight ontology \cite{DD07,DDV12,BojanczykST13,CGGMR20,FMW22a,CGMM23}. Notably, analysis is in this case cast as a parameterized verification problem, where the property of interest needs to hold irrespectively of how the read-only component is instantiated. Since process executions typically have a finite (yet unbounded) length, we express properties using a data-aware extension of \LTLf~\cite{dGV13}.
However, data variables cause undecidability already for  reachability over very limited systems \cite{BojanczykST13}.

We show a simple example of our verification problem.

\newcommand{\xcst}{\mathit{c}}
\newcommand{\xvip}{\mathit{vip}}
\newcommand{\xtotal}{\mathit{t}}
\newcommand{\xbank}{\mathit{a}}
\newcommand{\xitm}{\mathit{p}}
\begin{example}
\label{exa:intro}
In a webshop, a customer logs in using a customer database, and chooses five products from a product database. It is
checked whether the customer is eligible for a 20\% discount, and, if so, the discount is applied to the price of the order. If the balance in the customer's account is sufficient,
the order is shipped; if not the process is restarted.
The system can be modeled as a guarded transition system:

\noindent
{
\begin{tikzpicture}[node distance=30mm]
\tikzstyle{state}=[draw, rectangle, rounded corners=1pt, inner sep=3pt, line width=.7pt, scale=.6]
\tikzstyle{action}=[scale=0.6]
\node[state] (1)  {$\m{start}$};
\node[state, right of=1, xshift=-9mm] (2) {$\m{loggedIn}$};
 \node[state, right of=2] (3) {$\m{orderCreated}$};
  \node[state, right of=3, xshift=-5mm] (4) {$\m{billed}$};
\node[state, right of=4, xshift=-3mm] (5) {$\m{checked}$};
\node[state, right of=5, xshift=-8mm] (6) {$\m{shipped}$};
\draw[edge] ($(1) + (-.4,0)$) -- (1);
\draw[edge] (1) to node[action, above=0mm] {$\m{login}$} (2);
\draw[edge] (2) to node[above=0mm,action]{$\m{select}$} (3);
\draw[edge] (3) to node[above=0mm,action]{$\m{add}$} (4);
\draw[edge,bend left =12] (4) to node[above=0mm,action]{$\m{discount}$} (5);
\draw[edge,bend right = 12] (4) to node[below=0mm,action]{$\m{no\_discount}$} (5);
\draw[edge, rounded corners] (5) -- ($(5) + (0,-.5)$) -| node[action, above=0mm, xshift=15mm] {$\m{restart}$} (2);
\draw[edge] (5) to node[above=0mm,action]{$\m{ship}$} (6);
\end{tikzpicture}\par
}
\noindent
\begin{footnotesize}
$\begin{array}{@{}l@{\:}l@{\,}l@{\:}l@{}}
\m{login}\colon &\m{Cust}(\xcst^w,\xbank^w,\xvip^w) &
\m{select}\colon &\bigwedge_{i=1}^5 \m{ItemId}(\xitm_i^w) \\
\m{add}\colon  &\xtotal^w=\Sigma_{i=1}^5 f_{\text{price}}(\xitm_i^r) &
\m{no\_discount}\colon  &\neg \xvip^r \land \xtotal^w\,{=}\, \xtotal^r \\
\m{ship}\colon  &\xtotal^r\leq \xbank^r &
\m{restart}\colon &\top \\
\m{discount}\colon  &\xvip^r \land \xtotal^w\,{=}\,\xtotal^r{-}\frac 1 5 \xtotal^r 
\end{array}$
\end{footnotesize} 
\\
The process uses variables $\xcst$, $\xbank$, $\xvip$ for cus\-tomer id, account balance, and eligibility for dis\-counts; $\xitm_i$ for product ids; and $\xtotal$ for the total cost. These variables are read (superscript $^r$) or written (superscript $^w$) as specified by the transition guards.
The DB relations $\m{Cust}(\mathit{CustomerId},$ $\mathit{Account},\mathit{IsVIP})$ and 
$\m{Items}(\mathit{ItmId}, \mathit{Price})$ hold customer and product data.
%
Relevant verification properties are for example that discounts are only applied to eligible customers, and that each order is eventually shipped.
\end{example}

%
%

To address problems like the one of \exaref{intro}, we develop a semi-decision procedure that constructs a faithful, symbolic abstraction of the state space. While in purely numeric settings this idea has been explored by relying on quantifier elimination \cite{FMW22a}, the same approach cannot be lifted to theories formalizing data structures and relational databases, as quantifiers cannot be eliminated therein. We deal with this essential technical difficulty by combining automata-theoretic and cover computation \cite{CalvaneseGGMR21,CalvaneseGGMR22,GM08} 
techniques to respectively deal with the temporal and datatype dimensions.
%
%
%
More specifically, this paper makes the following four contributions:
\begin{inparaenum}[(1)]
\item
We devise a model checking procedure for \plainmydds{s} w.r.t.~theories enjoying two mild theoretical assumptions: decidable
satisfiability of quantifier-free formulas, and the 
computability of covers
\cite{CalvaneseGGMR21}. 
\item
We propose the abstract criterion of \emph{\theproperty}, and show that for systems enjoying this criterion, our model checking procedure is a decision procedure.
\item
This  decidability criterion is shown to apply to several concrete classes of
systems singled out in the literature, where guards combine database queries with arithmetic.
The property of \theproperty strictly generalizes several known decidability results from the literature (see below). 
\item We demonstrate the feasibility of our approach by an SMT-based implementation of the model checking technique, and an initial experimental evaluation.
\end{inparaenum}

In the sequel, after discussing related work, we introduce the \plainmydds model and the verification logic. We then
describe the model checking procedure and how to build automata to capture data-aware \LTLf properties. Next, we give a general decidability criterion and single out relevant, concrete decidable classes.
We close by discussing implementation and experiments.
Full proofs are in the appendix.

\paragraph{Related Work.}
Notable approaches to the verification of dynamic systems operating over (read-only) relational databases \cite{BojanczykST13,CGGMR20} or lightweight ontologies \cite{CGMM23} focused on
safety properties: we generalize all such approaches, since we support full \LTLf verification.
\cite{BojanczykST13} is based on amalgamation, but it is known that the 
computability of covers implies amalgamation \cite{CK}.
However, \cite{CGGMR20} also supports the richer setting of \emph{relational artifact systems}, where elements extracted from the database can be inserted into updatable relations (thus going beyond states containing variables). 
Several works deal with linear-time verification of systems operating over purely numeric data, with no support of other datatypes \cite{FMW22a,DD07,Demri06}.
A linear-time verification procedure was proposed for artifact systems with data dependencies and arithmetic \cite{DDV12}; this was shown to be a decision procedure for so-called \emph{feedback free} systems, which restrict how operations can be chained over time. 
This is the only decidability result combining DBs and arithmetic that we are aware of.
Our decidability criteria strengthen this result to the larger class of \emph{bounded lookback} systems. 
A procedure for restricted linear-time verification of transition systems operating over databases was also presented by \citeauthor{DLV19} (\citeyear{DLV19}), but the verification language is not full LTL, systems need to be \emph{hierarchical}, and no arithmetic is supported.
Related to our work is also a tableau-based semi-decision procedure for satisfiability of \LTLf with general SMT constraints 
\cite{GGG22,GGGW23}. 
%

\section{Framework}
\label{sec:background}

We start with the necessary preliminaries.
We consider a first-order multi-sorted \emph{signature} $\Sigma=\langle\SS, \PP, \FF, V, U\rangle$, where 
$\SS$ is a set of sorts;
$\PP$ is a set of predicate and $\FF$ a set of function symbols over $\SS$;
$V$ is a finite, non-empty set of \emph{data variables}; and
$U$ is a set of variables disjoint from $V$ that will be used for quantification; where all variables have a sort in $\SS$.
We assume that $\Sigma$ contains equality predicates for all sorts in $\SS$.
Then, $\Sigma$-terms $t$ are built in the usual way from $\FF$, $V$, and $U$.
An \emph{atom} is of the form $p(t_1,\dots, t_k)$ for $p\inn \PP$ and terms $t_1, \dots, t_k$; and a literal is an atom or its negation.
For a set of variables $Z$, a \emph{$\Sigma$-constraint} $c$ over $Z$ is of the form $\exists u_1, \dots, u_l.\phi$ such that $u_1, \dots, u_l\in U$, $\phi$ is a conjunction of $\Sigma$-literals, and all free variables in $c$ are in $Z$;
the set of all such constraints is denoted
$\CC_\Sigma(Z)$.
Moreover, \emph{$\Sigma$-formulas} $\phi$ have the form
$
\phi ::= p(t_1,\dots, t_k) \mid \phi \wedge \phi \mid \neg \phi \mid \exists u.\:\phi
$
where $p\in \PP$ and $u\in U$.
We use the usual shorthands $\forall$, $\vee$, and $\leftrightarrow$
for universal quantification, disjunction, and equivalence.
We call $\phi$ a \emph{state formula} if all its free variables are in $V$.
$\Sigma$-formulas without free variables are \emph{$\Sigma$-sentences}, and a
set of $\Sigma$-sentences is a 
\emph{$\Sigma$-theory} $\TT$. It is \emph{universal} if all its sentences have the form $\forall u_1, \dots, u_l.\, \phi$ with $\phi$ quantifier-free.

To define semantics, we use the standard notion of a \emph{$\Sigma$-structure} $M$,
which associates each sort $s\in \SS$ with a domain $s^M$, and each predicate $p\in \PP$ and function symbol $f\in \FF$ with a suitable interpretation $p^M$ and $f^M$. The equality predicates have the standard interpretation given by the identity relation.
The carrier of $M$, i.e., the union of all domains of sorts in $\SS$, is denoted by $|M|$.
%
A total function $\alpha\colon V \to |M|$ is called a \emph{state variable assignment}.
We also use $\alpha$ to denote a partial assignment 
$\alpha \colon V\cup U \to |M|$ with $V \subseteq Dom(\alpha)$.
We always assume that variables are mapped to an element of their respective domain. We write $\alpha[u \mapsto e]$ for the extended assignment such that
$\alpha[u \mapsto e](u)=e$ and $\alpha[u \mapsto e](x)=\alpha(x)$ for all $x\neq u$.

Given $M$ and $\alpha$, the \emph{evaluation} $\eval{t}$ of a term $t$ is defined as $\eval{v} = \alpha(v)$ if $v\inn V \cup U$, and $\eval{f(t_1,\dots, t_k)} = f^M(\eval{t_1}, \dots, \eval{t_k})$.
Whether a $\Sigma$-formula $\phi$ is \emph{satisfied} by $M$ and $\alpha$, denoted
$\alpha \models_M \phi$, is defined as follows:\\
\begin{tabular}{@{}l@{\:}l@{}}
$\alpha \models_M p(t_1,\dots, t_k)$
 & if $p^M(\eval{t_1}, \dots, \eval{t_k})$ holds\\
$\alpha \models_M \phi_1\wedge\phi_2$
 & if $\alpha  \models_M \phi_1$ and $\alpha  \models_M \phi_2$\\
$\alpha  \models_M \neg \phi$
 & if $\alpha  \not\models_M \phi$\\
$\alpha  \models_M \exists u.\:\phi$
 & if $\exists$ $e\in |M|$ s.t. $\alpha[u \mapsto e] \models_M \phi$
\end{tabular}\\
Note that $\alpha \models_M \phi$ is always defined if $\phi$ is a state formula and $\alpha$ a state variable assignment. 
If $\phi$ is a $\Sigma$-sentence, we write
$\models_M \phi$ for $\emptyset \models_M \phi$.
%
We also write $M \in \TT$ to denote that $M$ is a model of $\TT$, i.e.,
$\models_M \phi$ holds for all 
$\phi$ in $\TT$.
A state formula $\phi$ is \emph{$\TT$-satisfiable} if there is some $M\in \TT$ and state variable assignment $\alpha\colon V \to |M|$ such that $\alpha \models_M \phi$.
%
Moreover, two state formulas $\phi_1$ and $\phi_2$ are \emph{$\TT$-equivalent}, denoted
$\phi_1 \equiv_\TT \phi_2$, if $\neg(\phi_1 \leftrightarrow \phi_2)$ is not $\TT$-satisfiable.

A $\Sigma$-theory $\TT$ has \emph{quantifier elimination} (QE) if for every $\Sigma$-formula $\phi$
 there is a quantifier-free formula $\phi'$ that is $\TT$-equivalent to $\phi$.
As QE is a strong requirement,
we make use of the weaker notion of covers: a $\TT$-cover of an existential formula $\psi$ is the strongest formula $\TT$-implied by $\psi$.
Formally,  computing covers in $\TT$ is equivalent to having a \emph{model completion} \cite{CalvaneseGGMR21}: a universal $\Sigma$-theory $\TT$ has a model completion iff there exists a $\Sigma$-theory $\TT^*$ such that
$(i)$ every $\Sigma$-constraint satisfiable in a model of $\TT$ is satisfiable in a model of $\TT^*$, and
$(ii)$ $\TT^*$ has QE~\cite{Ghilardi04,CK}. Throughout the paper, in the formal proofs we make use of model completions, since they are easier to handle when adopting, as we do, a model-theoretic approach.

We will sometimes refer to common SMT theories~\cite{BarrettSST21}: the theory of equality and uninterpreted functions  (EUF) for a given $\Sigma$, and linear arithmetic over rationals (LRA), integers (LIA), or both (LIRA). While the arithmetic theories have QE~\cite{Presburger29}, EUF admits model completion and so do certain \emph{tame} combinations of LIRA and EUF~\cite{CGGMR19,CalvaneseGGMR22}.


\paragraph{Data-Aware Processes Modulo Theories.}
Let $\Sigma=\langle\SS, \PP, \FF, V, U\rangle$ be a signature.
For each data variable $v\in V$, 
let $v^r$ and $v^w$ be two annotated variables of the same sort, and set $V^r = \{v^r \mid v\in V\}$ and $V^w = \{v^w \mid v\in V\}$. These copies of $V$ are called the \emph{read} and \emph{write} variables; they will denote the variable values before and after a transition, respectively. 
Moreover, $\vec V$ denotes a vector that sorts the variables $V$ in an arbitrary but fixed order. 

\begin{definition}
\label{def:dds}
A \emph{data-aware process modulo theories} over $\Sigma$ (\mydds for short) is
a labelled transition system $\BB = \dmttuple$, where:
\begin{inparaenum}[\itshape (i)]
\item $\Sigma$ is a signature, 
\item $V$ is the finite, nonempty set of data variables in $\Sigma$;
\item the transition formulae $T(V^r,V^w)$ 
are a set of constraints in $\CC_\Sigma(V^r\cup V^w)$;
\item $I\colon V\to \FF_0$, called initial function, 
initializes variables,
where $\FF_0$ is the set of $\Sigma$-constants.
\end{inparaenum}
\end{definition}

\newcommand{\setx}{\m{xset}}
\newcommand{\sety}{\m{yset}}
\begin{example}
\label{exa:simple}
As a running example, we consider a simple \plainmydds $\BB = \dmttuple$ where $\Sigma$ combines LRA with EUF using uninterpreted sorts $status$ and $elem$, uninterpreted predicates $\m R \subseteq rat \times elem$, $\m P\subseteq elem$ and constants $\m a$, $\m b$, $\m o_1$, $\m o_2$. The set of variables $V$ consists of $x$ (sort $rat$) and $y$ (sort $elem$), and  a variable $s$ of sort $status$ that takes values $\m o_1$ and $\m o_2$. We set $I(s){=}\m o_1$, $I(x){=}0$ and $I(y){=}\m a$. The transitions $T = \{\m{setx}, \m{sety}\}$ are as follows:

{\centering
\begin{footnotesize}
\begin{tabular}{r@{\,=\,}l}
$\setx$ &
$\left (s^r \eqn \m o_1 \wedge s^w \eqn \m o_2 \wedge x^w {>}x^r \wedge \m R(x^w,y^r) \right)$ \\
$\sety $ &
$\left (s^r \eqn \m o_2 \wedge s^w \eqn \m o_1 \wedge \m P(y^w)\right)$
\end{tabular}
\end{footnotesize}
\qquad
\raisebox{-4mm}{
\begin{tikzpicture}[node distance=50mm,>=stealth']
\node[state] (1) {$\mathsf o_1$};
\node[state, right of=1] (2) {$\mathsf o_2$};
\draw[edge] ($(1) + (-.4,0)$) -- (1);
\draw[edge, rounded corners] (1) -- node[above,action]{$\setx\colon [x^w {>}x^r \wedge \m R(x^w,y^r)]$} (2);
\draw[edge, rounded corners] (2) -- ($(2) - (0,.4)$)
  -- node[below,action]{$\sety\colon [\m P(y^w)]$}  ($(1) - (0,.4)$)  -- (1);
\end{tikzpicture}}\par
}
\noindent
The transition system is a visualization of $\BB$ where the status $s$ is interpreted as a control state, to help readability.
Transitions simultaneously express conditions on read variables (superscript $^r$), and updates on the written ones (super\-script~$^w$):  e.g., $\setx$ 
requires the current control state ($s^r$) to be $\m o_1$, fixes the next control state ($s^w$) to $\m o_2$, and
nondeterministically sets $x$ to a new value ($x^w$) that is greater than its current value ($x^r$), and in relation $\m R$ with $y^r$.
\end{example}
%
Similarly, also \exaref{intro} can be formalized as a \mydds.
%

%
We next define the semantics for {\mydds}s.
For a $\Sigma$-theory $\TT$ and a model $M\in \TT$,
a \emph{state} of $\BB$ is an assignment $\alpha\colon V \to |M|$.
A \emph{guard assignment} $\beta$ is a function $\beta\colon V^r \cup V^w \mapsto |M|$.
As defined next, a transition $t$ can transform a state $\alpha$ into a new state $\alpha'$, updating the variable values in agreement with $t$,
while variables that are not explicitly written keep their previous value as per $\alpha$.

\begin{definition}
A \mydds $\BB = \dmttuple$ admits a \emph{$\TT$-step} from state $\alpha$ to 
$\alpha'$ via transition $t \in T$ w.r.t. a model $M\in \TT$,
denoted $\alpha \goto{t}_M \alpha'$,
if there is some guard assignment $\beta$ s.t. $\beta \models_M t$,
$\beta(v^r) = \alpha(v)$ and
$\beta(v^w) = \alpha'(v)$ for all $v \in V$.
\end{definition}

\noindent
A \emph{$\TT$-run} of $\BB$ is a pair $(M,\rho)$ of a model $M\in \TT$
and a sequence of steps of the form
$\rho\colon \alpha_0
\goto{t_1}_M  \alpha_1
\goto{t_2}_M \dots
\goto{t_n}_M \alpha_n$ where $\alpha_0(v) = I(v)^M$ for all $v\in V$.
Note that given $M$, the initial assignment $\alpha_0$ of a run is uniquely determined by the 
initializer $I$ of $\BB$.
The transition sequence 
$\langle t_1, \dots, t_n\rangle$ is called the \emph{abstraction} of $\rho$, and is denoted by $\sigma(\rho)$.
If clear from the context, we omit the model $M$
in the notation $\goto{t}_M$; and sometimes we refer to a \emph{run}, leaving the theory implicit. 
%

For \exasref{intro}{simple}, a natural choice for the theory $\TT$ is
that of arithmetic over $\mathbb Q$ together with equality and uninterpreted functions (i.e., the union of LRA and EUF), plus 
axioms that fix entries in the database.
E.g. for \exaref{simple}, if $M$ satisfies $\m R(4,\m a)$ and $\m R(8,\m b)$, then $(M,\rho)$ is a run, for
\begin{footnotesize}
\begin{equation*}
\label{eq:exrun}
\rho \colon
\left\{\begin{array}{@{}r@{\,=\,}l@{}}s&\m o_1\\[-.3ex]x&0\\[-.3ex]y&\m a\end{array}\right\} \goto{\setx}
\left\{\begin{array}{@{}r@{\,=\,}l@{}}s&\m o_2\\[-.3ex]x&4\\[-.3ex]y&\m a\end{array}\right\} \goto{\sety}
\left\{\begin{array}{@{}r@{\,=\,}l@{}}s&\m o_1\\[-.3ex]x&4\\[-.3ex]y&\m b\end{array}\right\} \goto{\setx}
\left\{\begin{array}{@{}r@{\,=\,}l@{}}s&\m o_2\\[-.3ex]x&8\\[-.3ex]y&\m b\end{array}\right\}
\end{equation*}
\end{footnotesize}
%

\paragraph{Verification language.}
\label{sec:lang}
From now on, we assume that $V$ is the set of data variables of a given \plainmydds $\BB$, and all variables in verification properties are in $V$.
Formally, our verification language $\LL_\Sigma$ consists of all properties $\psi$ defined as follows:

{\centering
$\psi := c \mid 
\psi {\wedge} \psi \mid  \psi {\vee} \psi \mid 
\X \psi \mid 
\G \psi \mid \psi \U \psi$\par
}

\noindent
where $c$ is a constraint in $\CC_\Sigma(V)$. 
Note that $\LL_\Sigma$ does not support negation, but can express formulae in negation normal form. For convenience, we abbreviate $\top := (v=v)$ for some $v\in V$, and $\F \psi: = \top \U \psi$.
We adopt LTL semantics over finite traces (LTL$_f$)~\cite{dGV13}:

\begin{definition}
\label{def:witness}
A run $(M,\rho)$ of $\BB$
for $\rho\colon \alpha_0
\goto{t_1}_M  \alpha_1
\goto{t_2}_M \dots
\goto{t_n}_M \alpha_n$
\emph{satisfies} $\psi \in \LL_\Sigma$, denoted 
$\rho \models_M \psi$, iff $\rho \models_M^0 \psi$ holds, 
where for all $i$, $0 \leq i \leq n$:

\noindent
\begin{tabular}{@{}l@{\:}l@{}}
$\rho \models_M^i c$  & iff $\alpha_i\models_M c$ \\
$\rho \models_M^i \psi_1 \wedge \psi_2$ & iff 
$\rho \models_M^i \psi_1$ and $\rho \models_M^i \psi_2$ \\
$\rho \models_M^i \psi_1 \vee \psi_2$ & iff 
$\rho \models_M^i \psi_1$ or $\rho \models_M^i \psi_2$\\
$\rho \models_M^i \X\psi$ &iff $i<n$ and 
 $\rho \models_M^{i{+}1} \psi$\\
$\rho \models_M^i \G\psi$ & iff 
$\rho \models_M^i \psi$ and ($i=n$ or
$\rho\models_M^{i{+}1} \G\psi$)\\
$\rho \models_M^i \psi_1 \U \psi_2$ & iff 
$\rho \models_M^i \psi_2$, or ($i\,{<}\,n$ and both \\
 & $\rho \models_M^i \psi_1$ and $\rho\models_M^{i+1} \psi_1 \U \psi_2$)
\end{tabular}
\end{definition}

\noindent
A $\TT$-run $(M,\rho)$ is a \emph{$\TT$-witness} for $\psi \in \LL_\Sigma$ if 
$\rho \models_M \psi$.
The problem addressed in this paper is the following:

\begin{definition}[$\TT$-verification task]
\label{def:verification:task}
Given a \mydds $\BB$, a $\Sigma$-theory $\TT$, and $\psi \in \LL_\Sigma$,
check whether there exists a $\TT$-witness for $\psi$ in $\BB$.
\end{definition}


The \plainmydds in \exaref{simple} has a witness for $(x \geq 0) \U (s\eqn \m o_2 \wedge x\eqn4)$, e.g. the run $(M, \rho)$ above. 
The \plainmydds in 
\exaref{intro} has no witness for $\F (s{=}\m{shipped} \wedge \neg \xvip \wedge \xbank < \sum_{i=1}^5 f_{\mathit{price}}(\xitm_i))$, so  an order is not shipped if the account balance is insufficient.

\begin{remark} 
\defref{verification:task} generalizes
related verification tasks for 
data-aware processes, notably
(1) verification of safety properties in
Simple Artifact Systems (SASs) over relational databases with key dependencies  \cite{CGGMR20}, and (2) linear-time model checking of DDSs with arithmetics  \cite{FMW22a}. 
SASs can be represented as DMTs by taking the theory EUF with constants, unary functions, and arbitrary relations, possibly together with a read-only database.
\emph{Artifact variables} of SASs can be represented as data variables in $V$, and safety properties can be expressed in $\LL_\Sigma$.
DDSs can be formalized as DMTs by taking the theory LIRA: The initial assignment of variables can be encoded in $I$, one additional designated variable can be used to model DDS control states, and the transition formulae $T$ in DMTs can represent guarded actions.
 DDSs have final states, but any verification property can be extended to require that such a state is reached.
%
\end{remark}

LTL$_f$ verification is undecidable for dynamic systems over numeric variables, as reachability in 2-counter machines can be encoded~\cite{FMW22a}, so
the verification task considered here is undecidable, too.

\section{Model Checking}
\label{sec:mc}

From now on, we assume a fixed $\Sigma$-theory $\TT$ that satisfies:

\begin{assumptions}
\label{ass:theory}
\begin{inparaenum}[(a)]
\item Satisfiability of quantifier-free $\TT$-formulas is decidable; 
\item  $\TT$ either has QE, or is universal and has a model completion $\TT^*$.
\end{inparaenum}
\end{assumptions}

\begin{remark}
\label{rem:TT*}
As constraints are existential formulas, Assumption (b) implies that
$\TT$-satisfiability and $\TT^*$-satisfiability
of $\Sigma$-constraints coincide~\cite{CGGMR20}.
\end{remark}

\paragraph{Automata construction.}
It is well-known that \LTLf properties can be captured by non-deterministic automata (NFA)~\cite{dGV13,GMM14}. Our model checking procedure relies on such an NFA for a given verification property $\psi\in \LL_\Sigma$.
Its construction differs from earlier work only in that propositional/numeric atoms are replaced by constraints.
We sketch here the general approach, while full details are in the appendix.
For a \mydds $\BB = \dmttuple$ and $\psi\in \LL_\Sigma$, let $C$ be the set of all constraints in $\psi$.
We then build an NFA 
$\NFA\,{=}\,(Q, \Theta, \varrho, q_0, \{q_f, q_e\})$ where:
\begin{inparaenum}[\itshape (i)]
\item the set $Q$ of states consists of properties $\phi \in \LL_\Sigma\cup \{\top,\bot\}$,
\item the alphabet is $\smash{\Theta:=2^{C}}$, i.e., a symbol is a subset of constraints in $C$,
\item the initial state is
$q_0\,{:=}\,\inquotes{\psi}$,
\item $q_f\,{:=}\,\inquotes{\top}$ is a final state,  
\item $q_e$ is an additional, designated final state, 
\item $\varrho \subseteq Q \times \Theta \times Q$ is a set of transitions.
\end{inparaenum}

To state correctness, let a word $\langle\varsigma_0,\dots,\varsigma_n\rangle\in \Theta^*$ be \emph{consistent} with a run $(M,\rho)$ for
$\smash{\rho\colon \alpha_0 
\goto{t_1} \alpha_1
\goto{t_2} \dots \goto{t_n} \alpha_n}$ if
$\alpha_i \models_M \bigwedge \varsigma_i$
for all $0\,{\leq}\,i\,{\leq}\,n$.
The main result about $\NFA$ is:

\begin{restatable}{proposition}{propositionNFA}
\label{prop:nfa}
$\NFA$ accepts a word $w$ that is consistent with $(M,\rho)$ iff $\rho \models_M \psi$. 
\end{restatable}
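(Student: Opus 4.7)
The plan is to prove a slightly stronger invariant by induction on the length of the suffix considered. Specifically, for every state $\phi\in Q$, every run $(M,\rho)$ with $\rho\colon \alpha_0\goto{t_1}\cdots\goto{t_n}\alpha_n$, every position $i$ with $0\leq i\leq n$, and every word $w=\langle\varsigma_i,\dots,\varsigma_n\rangle\in\Theta^*$ that is consistent with the suffix in the sense that $\alpha_j\models_M\bigwedge\varsigma_j$ for all $i\leq j\leq n$, I claim that $\NFA$ has an accepting run on $w$ starting from $\phi$ iff $\rho\models_M^i \phi$. The proposition of interest then follows by instantiating with $\phi=\psi=q_0$ and $i=0$.

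First I would dispatch the base case $i=n$, where $w$ consists of a single letter. The residual formula $\phi$ must then be satisfied ``now with no further step available'', so subformulae of shape $\X\phi'$ evaluate to false, while $\G\phi'$ and $\phi_1\U\phi_2$ collapse to $\phi'$ and $\phi_2$, respectively. The two designated final states $q_f=\top$ and $q_e$ are tailored to detect exactly this situation: $q_e$ is the standard device used to distinguish end-of-trace obligations from transitions that would otherwise require a subsequent letter.

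For the inductive step I would rely on the fact that $\varrho$ implements the usual \LTLf expansion laws $\G\phi\equiv\phi\wedge\X\G\phi$ and $\phi_1\U\phi_2\equiv\phi_2\vee(\phi_1\wedge\X(\phi_1\U\phi_2))$, so that a transition $(\phi,\varsigma,\phi')$ encodes a split of $\phi$ into an immediate part (justified at the current position by the constraints in $\varsigma$) and a residual part $\phi'$ to be discharged later. Applying the induction hypothesis to the suffix starting at $i+1$ with state $\phi'$ and unfolding $\rho\models_M^i\phi$ according to the top-level connective of $\phi$, a routine case analysis yields the claim; consistency of $\varsigma_i$ with $\alpha_i$ is exactly what aligns the symbolic bookkeeping of $\NFA$ with the semantic evaluation of atomic constraints against $\alpha_i$ under $M$.

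The only subtle point, and thus the main obstacle, is handling the final position uniformly with the intermediate ones: classical \LTLf-to-NFA constructions sidestep this via the auxiliary final state $q_e$ together with parallel ``last-step'' transitions, and I would have to verify that this mechanism remains sound once propositional atoms are replaced by $\Sigma$-constraints. Since $\varrho$ inspects only the propositional skeleton above the atoms, while the symbols $\varsigma_i\in\Theta$ act as a purely syntactic placeholder for atomic truth at position $i$, the verification reduces to checking that $\varsigma_i$ agrees with $\{c\in C \mid \alpha_i\models_M c\}$ on the constraints appearing in $\phi$; this is precisely the consistency hypothesis on $w$. The remainder is a verbatim adaptation of the standard correctness proof, cf.~\cite{GMM14,dGV13}.
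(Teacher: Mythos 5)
Your overall strategy---a positional induction driven by the \LTLf expansion laws, with the auxiliary final state $q_e$ absorbing end-of-trace obligations---is the same one the paper uses (it factors the argument into a local expansion lemma for $\delta$, an induction over positions, and a separate lemma justifying the removal of the $\last$-marker). However, the invariant you propose to prove is false as stated, and the base case already breaks. You quantify \emph{universally} over consistent words: ``for every consistent $w$, $\NFA$ accepts $w$ from $\phi$ iff $\rho\models_M^i\phi$.'' Consistency of $w$ only requires $\alpha_j\models_M\bigwedge\varsigma_j$, i.e.\ $\varsigma_j$ is a \emph{subset} of the constraints true at $\alpha_j$; in particular $\varsigma_j=\emptyset$ is always consistent. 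Take $\phi=c$ an atomic constraint with $\alpha_n\models_M c$ and $w=\langle\emptyset\rangle$: the only transitions out of $\inquotes{c}$ go to $\inquotes{\top}$ reading $\{c\}$ and to $\inquotes{\bot}$ reading $\emptyset$, so the consistent word $\langle\emptyset\rangle$ is rejected even though $\rho\models_M^n c$. The correct invariant puts the existential quantifier over $w$ \emph{inside} the biconditional (``there exists a consistent $w$ accepted from $\phi$ iff $\rho\models_M^i\phi$''), which is exactly the shape of the statement being proved and of the paper's intermediate lemma. Relatedly, your final paragraph asserts that consistency means $\varsigma_i$ \emph{agrees} with $\{c\in C\mid\alpha_i\models_M c\}$ on the constraints of $\phi$; it does not---it is a one-sided inclusion---and this is precisely why the ``if'' direction cannot merely verify an arbitrary consistent word but must \emph{construct} one, choosing at each position the constraint set that the $\delta$-expansion of the current residual formula demands (the paper does this via a totality lemma for $\delta$ and the local expansion lemma).

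A second, smaller gap: you flag the soundness of dropping the $\last$/$\neg\last$ markers from the transitions as ``the main obstacle'' and then defer it. This is not a verbatim adaptation: because the alphabet letters are now sets of $\Sigma$-constraints and the construction prunes $\TT$-unsatisfiable letters, one must show that whenever an accepting transition carries a (removed) $\last$ or $\neg\last$ marker inconsistent with the actual position, an alternative marker-free transition with the same target and a constraint set still consistent with that position exists. The paper proves this by a separate structural induction (its Lemma on $\delta$ and $\last$), and it is the one place where the $\TT$-equivalence of constraints genuinely enters; your proposal should carry this out rather than cite it as standard.
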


\noindent
All details can be found in the appendix.
We show here instead the construction for an example.

\begin{example}
\label{exa:nfa}
The NFA $\NFA$ for $\psi := (x \geq 0) \U (s\eqn \m o_2 \wedge x\eqn4)$ is as shown below.
Note how the initial state $\psi$ has a loop labeled $x\geq 0$ that can be taken until both $s\eqn \m o_2$ and $x\eqn4$ hold; at this point the final state $\top$ can be reached. The construction is detailed in \exaref{nfa2}.

{\centering
\tikz[node distance=28mm]{
\tikzstyle{formula}=[scale=.75, rectangle, rounded corners=2pt, inner sep=2pt, draw]
\tikzstyle{edge}=[scale=.7]
\node[formula] (psi) {${\psi}$};
\node[formula, right of=psi, double] (top) {${\top}$};
\draw[->] ($(psi) + (-.3,.3)$) to (psi);
\draw[->] (psi) to 
 node[edge, above, yshift=0pt]{$\{x\eqn 4, s\eqn \m o_2\}$}
 (top);
\draw[->] (top) to[loop right, looseness=7] node[edge, right]{$\emptyset$} (top);
\draw[->] (psi) to[loop left, looseness=7] node[edge, left]{$\{x\,{\geq}\,0\}$}(psi);
}\par

}
\end{example}

\paragraph{Verification procedure.}
We next develop our model checking technique. To this end, we assume a $\Sigma$-theory $\TT$, 
a \mydds $\BB = \dmttuple$, and $\psi\in\LL_\Sigma$ with associated NFA $\NFA$ and alphabet $\Theta$ as above.
We first aim to show the crucial fact that witnesses exist in $\TT$ iff they do in $\TT^*$:

\begin{theorem}
\label{thm:T:witness}
$\BB$ admits a $\TT$-witness for $\psi \in \LL_\Sigma$ iff
it admits a $\TT^*$-witness.
\end{theorem}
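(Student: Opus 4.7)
The plan is to prove both directions separately, treating the easy one first and reserving the bulk of the work for the forward implication, which requires lifting a $\TT$-run to a $\TT^*$-run.

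For the $(\Leftarrow)$ direction, I would invoke the standard fact that, under Assumption~\ref{ass:theory}(b), every $\TT^*$-model is already a $\TT$-model: if $\TT$ has QE we simply take $\TT^*=\TT$; if $\TT$ is universal with model completion $\TT^*$, this inclusion is a classical property of model completions (see \cite{CK}). Hence any $\TT^*$-witness $(M^*,\rho)$ is automatically a $\TT$-witness, and this direction is immediate.

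For the $(\Rightarrow)$ direction, I would start from a $\TT$-witness $(M,\rho)$ with $\rho\colon \alpha_0\goto{t_1}_M\cdots\goto{t_n}_M\alpha_n$ and first construct an embedding $h\colon M\hookrightarrow M^*$ into a suitable $M^*\in\TT^*$. The embedding is obtained by a compactness argument applied to the atomic diagram $\Delta(M)$ together with $\TT^*$: every finite fragment of $\Delta(M)$, after existentially quantifying its parameters from $|M|$, becomes a $\Sigma$-constraint that is $\TT$-satisfiable in $M$, and hence by \remref{TT*} also $\TT^*$-satisfiable; compactness then yields a common model $M^*\in\TT^*$ together with the desired embedding $h$ of $M$ into $M^*$.

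Using $h$, I would lift $\rho$ by setting $\alpha_i^*:=h\circ\alpha_i$ and lifting each guard witness $\beta_i$ of the step $\alpha_{i-1}\goto{t_i}_M\alpha_i$ to $\beta_i^*:=h\circ\beta_i$. Since embeddings preserve interpretations of constants, $\alpha_0^*(v)=I(v)^{M^*}$ for each $v\in V$, so the initial assignment is correct in $M^*$. Since embeddings preserve existential formulas and each transition formula $t_i$ is a $\Sigma$-constraint, $\beta_i^*\models_{M^*}t_i$; and by construction $\beta_i^*(v^r)=\alpha_{i-1}^*(v)$ and $\beta_i^*(v^w)=\alpha_i^*(v)$. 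Thus $(M^*,\rho^*)$ with the evident $\rho^*$ is a valid $\TT^*$-run. To finish, I would show $\rho^*\models_{M^*}\psi$ by a straightforward structural induction on $\psi$ using \defref{witness}: $\LL_\Sigma$ is negation-free and built monotonically from constraints via $\wedge,\vee,\X,\G,\U$ over a finite trace, and each constraint $c$ in $\psi$ is preserved by $h$, so $\alpha_i\models_M c$ implies $\alpha_i^*\models_{M^*}c$, and the induction goes through.

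The main obstacle I expect is the embedding step itself: one must carefully derive the existence of $h$ from the cover-based formulation of model completion adopted in the paper rather than from a direct embedding axiom, and check that constraints are still preserved despite the fact that the quantifier-free body $\phi$ of a constraint $\exists\bar u.\phi$ may contain negated atoms — this works because embeddings preserve every quantifier-free formula on shared parameters, so negated atoms within $\phi$ are transferred before the existential witnesses are relocated into $M^*$.
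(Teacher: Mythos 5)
Your proof is correct, but it takes a genuinely different route from the paper's. The paper never embeds models: it encodes the whole run, together with a word accepted by the property NFA, as a single existential formula $H(\sigma(\rho),w)$ via \lemref{abstraction}\,(1), transfers satisfiability of that existential formula between $\TT$ and $\TT^*$ using \remref{TT*}, and decodes a run in the new model via \lemref{abstraction}\,(2); the converse direction is symmetric. You instead construct an explicit embedding $h\colon M\hookrightarrow M^*$ by compactness on the atomic diagram and transport the run pointwise, checking preservation of the initializer, of the existential transition constraints, and of the negation-free property $\psi$ by structural induction. The paper's route buys economy: \lemref{abstraction} and \remref{TT*} are needed anyway for \thmref{model:checking}, so the theorem follows in a few lines. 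Your route buys independence from the automaton machinery and a slightly stronger conclusion, namely that the $\TT^*$-witness is the $h$-image of the original run and hence has the same abstraction $\sigma(\rho)$. Two points to tighten: your backward direction rests on every model of $\TT^*$ being a model of $\TT$, which holds for the classical notion of model completion the paper cites but is not literally one of the two conditions it states, whereas the symmetric satisfiability-transfer argument sidesteps this; and your compactness step essentially re-derives, for arbitrary existential formulas, the content that \remref{TT*} already packages, so it is an ingredient you reprove rather than avoid.
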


To prove \thmref{T:witness}, we introduce notions that will be used throughout the paper:
Let $\writ(t) := \{x \mid x^w\in V^w\text{ occurs in }t\}$ be the set of variables written by a transition $t\in T$. We write $\trans{t}$ for the \emph{extended transition formula}
$\trans{t}:= t \wedge \bigwedge_{v\not\in \writ(t)} v^{w}\,{=}\,v^{r}$, which
simply expresses that the transition formula $t$ must hold and the values of all variables that are not explicitly written are propagated. 
Below, we consider disjoint variable sets $V_0, V_1, V_2,\dots$ that are indexed copies of $V$, i.e.
$V_i := \{v_i \mid v\in V\}$, and assume that $\vec V_0, \vec V_1, \vec V_2,\dots$
are ordered in the same way as $\vec V$.
For a formula $\phi$ with free variables $V$, we write $\phi(\vec V_i)$ for the formula obtained by renaming $\vec V$ to $\vec V_i$; and for $t$ with free variables $V^r\cup V^w$,
the formula $t(\vec V_i, \vec V_j)$ replaces $\vec V^r$ by $\vec V_i$ and $\vec V^w$ by $\vec V_j$.
Moreover, $\phiinit := \bigwedge_{v\in V} v_0{=}I(v)$ is a formula that encodes the initial state.

We next define a formula that combines all constraints accumulated in a transition sequence, together with a word $w$ that expresses additional constraints relevant for verification.
For readability, we use $\varsigma \in \Theta$ as a formula to mean $\bigwedge \varsigma$.

\begin{definition}
\label{def:H}
For a transition sequence $\sigma {=} \langle t_1, \dots, t_n\rangle$ of $\BB$ and  $w=\langle\varsigma_0,\dots, \varsigma_n\rangle\in \Theta^*$, the formula 
$H(\sigma,w)$ is 

\noindent
$\phiinit \wedge 
\varsigma_0(\vec V_0) \wedge
\trans{t_1}(\vec V_0, \vec V_1) \wedge \varsigma_1(\vec V_1) \wedge
\dots
\wedge \trans{t_n}(\vec V_{n-1}, \vec V_n) \wedge \varsigma_n(\vec V_n)
$
\end{definition}
E.g., for the transition sequence $\sigma = \langle \setx\rangle$ of the \plainmydds in \exaref{simple} and
$w = \langle \{x\,{\geq}\,0\}, \{x{=}4,s{=}\m o_2\}\rangle$,
$H(\sigma, w) = (s_0{=}\m o_1 \wedge x_0{=}0 \wedge y_0{=}\m a) \wedge (x_0{\geq}0) \wedge
(s_0{=}\m o_1 \wedge s_1{=}\m o_2 \wedge x_1{>}x_0 \wedge \m R(x_1, y_0) \wedge y_1{=}y_0 \wedge x_1{=}4))$.
In \lemref{abstraction} below we prove that satisfying assignments for formulas $H(\sigma,w)$ are witnesses for $\psi$, if $w$ is accepted by $\NFA$. 
To state this, we need some notation to link assignments with runs:

\begin{definition}
Let $\sigma = \langle t_1, \dots, t_n\rangle$ be a transition sequence, $w\in \Theta^*$, and $M$ be a $\Sigma$-structure.
\begin{compactitem}
\item
If $(M,\rho)$ is a run of $\BB$ for
$\rho \colon \alpha_0
\goto{t_1} \alpha_1
\goto{t_2} \dots
\goto{t_n} \alpha_n$
the \emph{run assignment} $\nu(\rho)$ has domain $\bigcup_{i=0}^n V_i$ and sets $\nu(\vec V_i) := \alpha_i(\vec V)$, for all $v\,{\in}\,V$ and $0\,{\leq}\,i\,{\leq}\,n$.
\item
For an assignment $\nu$ such that $\nu \models_M H(\sigma,w)$ 
the \emph{decoded run} $\rho(M, \nu, \sigma)$ is the sequence
$\alpha_0
\goto{t_1} \alpha_1
\goto{t_2} \dots
\goto{t_n} \alpha_n$
where $\alpha_i(\vec V)\,{:=}\,\nu(\vec V_i)$, for all $v\inn V$ and $i$. 
\end{compactitem}
\end{definition}

\begin{restatable}{lemma}{abstraction}
\label{lem:abstraction}
\begin{compactenum}[(1)]
\item
If $\BB$ admits a witness $(M,\rho)$ for $\psi$, there is a word $w$ consistent with $\rho$ and accepted by $\NFA$ such that $\nu(\rho) \models_M H(\sigma(\rho),w)$. 
\item
If $\nu \models_M H(\sigma,w)$ for some assignment $\nu$, transition sequence $\sigma$ of $\BB$, and word $w$ accepted by $\NFA$, then $\rho(M, \nu, \sigma)$ is a witness for $\psi$ of $\BB$.
\end{compactenum}
\end{restatable}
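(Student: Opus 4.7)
The plan is to prove both directions by establishing a tight correspondence between runs of $\BB$ and satisfying assignments of $H(\sigma,w)$, using Proposition~\ref{prop:nfa} to connect acceptance by $\NFA$ with satisfaction of $\psi$. The core technical content lies in showing that the extended transition formula $\trans{t}$ faithfully captures single steps $\goto{t}_M$, in both directions, once we match read/write copies with indexed variable copies $\vec V_i, \vec V_{i+1}$.

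For part (1), starting from a witness $(M,\rho)$ with $\rho\colon \alpha_0 \goto{t_1} \dots \goto{t_n} \alpha_n$, I would first invoke Proposition~\ref{prop:nfa} to obtain a word $w = \langle\varsigma_0,\dots,\varsigma_n\rangle$ consistent with $\rho$ and accepted by $\NFA$. Then I would verify $\nu(\rho) \models_M H(\sigma(\rho),w)$ conjunct by conjunct: (a) $\nu(\rho) \models_M \phiinit$ follows directly from $\alpha_0(v) = I(v)^M$ (definition of run); (b) each $\nu(\rho) \models_M \varsigma_i(\vec V_i)$ follows from consistency of $w$ with $\rho$, since $\nu(\rho)(\vec V_i) = \alpha_i(\vec V)$ and $\alpha_i \models_M \bigwedge \varsigma_i$; (c) for each $\trans{t_{i+1}}(\vec V_i, \vec V_{i+1})$, use the guard assignment $\beta$ witnessing $\alpha_i \goto{t_{i+1}} \alpha_{i+1}$ to derive satisfaction of $t_{i+1}(\vec V_i, \vec V_{i+1})$, and note that for $v \notin \writ(t_{i+1})$ the equality $v^w = v^r$ holds because $\alpha_{i+1}(v) = \alpha_i(v)$ (any such $v$ is not explicitly written, so $\alpha_{i+1}$ inherits its value from $\alpha_i$).

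For part (2), given $\nu \models_M H(\sigma,w)$, I would first confirm that the decoded sequence $\rho(M,\nu,\sigma)$ is an actual run by defining $\alpha_i(\vec V) := \nu(\vec V_i)$: the initial condition $\alpha_0(v) = I(v)^M$ holds since $\nu \models_M \phiinit$, and for each $i$ I would construct the guard assignment $\beta$ with $\beta(v^r) := \alpha_i(v)$ and $\beta(v^w) := \alpha_{i+1}(v)$, which satisfies $t_{i+1}$ because $\nu \models_M \trans{t_{i+1}}(\vec V_i, \vec V_{i+1})$ (the additional equalities $v^w = v^r$ for $v \notin \writ(t_{i+1})$ are consistent with this construction by definition of $\alpha_{i+1}$ through $\nu$). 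Moreover, since $\nu \models_M \varsigma_i(\vec V_i)$, the word $w$ is consistent with $\rho(M,\nu,\sigma)$. As $w$ is accepted by $\NFA$, Proposition~\ref{prop:nfa} yields $\rho(M,\nu,\sigma) \models_M \psi$.

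I expect the main obstacle to be purely notational: carefully bookkeeping three parallel systems of variables (the read/write copies $V^r,V^w$ used in $T$, the indexed copies $V_i$ used in $H$, and the variables $V$ used in run states), and verifying that propagation of unwritten variables is handled consistently on both sides of the correspondence. Once the translation between guard assignments $\beta$ and the ``slice'' of $\nu$ on $\vec V_i \cup \vec V_{i+1}$ is laid out, both parts reduce to routine conjunct-by-conjunct checks.
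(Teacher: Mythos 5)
Your proposal is correct and follows essentially the same route as the paper's proof: both directions reduce to matching the slices $\nu(\vec V_i),\nu(\vec V_{i+1})$ of the assignment with guard assignments $\beta$ for each step (including the frame equalities $v^w = v^r$ from $\trans{t}$ for unwritten variables), with Proposition~\ref{prop:nfa} supplying the link between acceptance of a consistent word and satisfaction of $\psi$. The only cosmetic difference is that the paper organizes the conjunct-by-conjunct check as an explicit induction on the prefix length of $\rho$ and $\sigma$.
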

\begin{proof}[Proof (sketch)]
Both directions are by induction on $\rho$ and $\sigma$.
For the connection between a witness and $\NFA$ accepting a word consistent with $\rho$, \propref{nfa} is used.
\end{proof}

\noindent
At this point, we are ready to prove \thmref{T:witness}.
\begin{proof}[Proof (of \thmref{T:witness})]
If $(M,\rho)$ is a $\TT$-witness for $\psi$ then by \lemref{abstraction}\:(1), $(M,\nu(\rho))$ satisfies $H(\sigma(\rho),w)$ for some $w$ consistent with $\rho$ and accepted by $\NFA$.
By \remref{TT*}, there must be some $(M',\nu')$ which satisfies $H(\sigma(\rho),w)$ in $\TT^*$ as $H(\sigma(\rho),w)$ is an existential formula. By \lemref{abstraction}\:(2), $\rho(M', \nu', \sigma)$ is a $\TT^*$-witness for $\psi$.
\end{proof}

We next define our main data structure for model checking: a product automaton $\smash{\NN^\psi_\BB}$ built from the NFA $\NFA$ and the \mydds $\BB$. 
The construction uses an $\update$ function to capture how the current variable configuration, expressed as a formula $\phi$ with free variables $V$, changes by executing a transition $t$. Namely,
%
$\update(\phi, t) := \exists \vec X. \phi(\vec X) \wedge \trans{t}(\vec X, \vec V)$,
where $\vec X$ is a fresh copy of $\vec V$.
%
Below, we write $\Phi(V)$ for the set of all quantifier-free $\Sigma$-formulae over variables $V$.

\begin{definition}
\label{def:product}
Given $\BB = \dmttuple$ and the NFA $\NFA$ as above,
the \emph{product automaton} 
$\smash{\NN^\psi_\BB:=(P, R, p_0, P_F)}$ with states $P \subseteq Q \times \Phi(V)$, transition relation $R$, initial state $p_0$, and final states $P_F$ is inductively defined as follows:
\begin{compactitem}[$\bullet$]
\item $P$ contains the initial state $p_0:=(q_0, \phiinit)$; and 
\item if $(q, \varphi)\in P$ and either
\begin{compactitem}
\item[$(i)$] $(q, \varphi) = p_0$, $q_0 \goto{\varsigma} q'$ and $t=\top$, or
\item[$(ii)$]
$(q, \varphi) \neq p_0$, $q \goto{\varsigma} q'$ in $\NFA$, and  $t\in T$
\end{compactitem}
 such that the formula $\xi := \update(\varphi, t) \wedge \varsigma$ 
is $\TT$-satisfiable, there is some $(q', \varphi') \in P$ s.t.
$\varphi' {\equiv_{\TT^*}} \xi$, and
$(q, \varphi) \goto{t,\varsigma} (q', \varphi')$ is in $R$,
\item $P_F$ consists of all $(q, \varphi)\in P$ such that $q$ is final in $\NFA$.
\end{compactitem}
\end{definition}
%

We add two remarks on \defref{product}:
First, the distinction of 
the two kinds of transitions $(i)$ and $(ii)$ is a technical requirement
as constraints in the property $\psi$ are evaluated in states of a run but constraints in $\BB$ are on transitions. 
To achieve this ``offset'', we use a dummy transition $t=\top$ from the initial state.
Second, a quantifier-free formula $\phi'$ as required by the second item of \defref{product} exists because $\TT^*$ has QE.

%
%
We now consider \emph{paths} $\pi$ in $\smash{\NN_\BB^\psi}$ that start in the initial state and thus have, for some $n\geq 0$, the form
\begin{equation}
\label{eq:path}
(q_0, \phiinit) {\goto{\top,\varsigma_0}} (q_1, \phi_1) {\goto{t_1,\varsigma_1}} \dots {\goto{t_n,\varsigma_n}} (q_{n+1},\phi_{n+1})
\end{equation}
i.e., the first edge is labeled $(\top, \varsigma_0)$ for some $\varsigma_0$.
We write $\sigma_\pi = \langle t_1, \dots, t_n\rangle$ for the transition sequence of $\BB$
and $w_\pi$ for the word $\langle\varsigma_0,\dots,\varsigma_{n}\rangle$ read from the edge labels of $\pi$.
Our  main result relates paths to final states to witnesses:

\begin{theorem}
\label{thm:model:checking}
\begin{compactenum}
\item [(1)]
For every path $\pi$ to a final state of $\smash{\NN_\BB^\psi}$,  
$H(\sigma_\pi, w_\pi)$ is satisfiable by some $(M,\nu)$ for $M\in \TT$,
and $\rho(M, \nu,\sigma_\pi)$ is a $\TT$-witness for $\psi$.
\item[(2)] If $\BB$ admits a $\TT$-witness for $\psi$ then $\smash{\NN_\BB^\psi}$ has a final state.
\end{compactenum}
\end{theorem}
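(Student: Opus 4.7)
The two parts are tied together by a \emph{semantic invariant} that I would establish first: for every path $\pi$ in $\NN^\psi_\BB$ that reaches $(q_k,\phi_k)$, the quantifier-free formula $\phi_k$ (viewed over the fresh copy $\vec V_k$) is $\TT^*$-equivalent to the existentially quantified history
\[
\phi_k(\vec V_k)\;\equiv_{\TT^*}\;\exists \vec V_0,\dots,\vec V_{k-1}.\;H(\sigma_\pi^{\leq k}, w_\pi^{\leq k}),
\]
where $\sigma_\pi^{\leq k}$ and $w_\pi^{\leq k}$ denote the prefixes read along the first $k$ edges. This can be proved by induction on $k$: in the base case $H(\langle\rangle,\langle\varsigma_0\rangle) = \phiinit \wedge \varsigma_0(\vec V_0)$ is essentially $\phi_1$ up to $\TT^*$-equivalence (using clause~$(i)$ of \defref{product} with the dummy $t=\top$), and in the inductive step the definition $\phi_{k+1} \equiv_{\TT^*}\update(\phi_k,t_{k+1}) \wedge \varsigma_{k+1}$ unfolds, by the induction hypothesis and by the definition of $\update$, exactly into the existential closure of the extended history. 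QE in $\TT^*$ is what makes this equivalence expressible in quantifier-free form.

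For part~(1), once the invariant is in hand, assume $\pi$ ends in a final state $(q_{n+1},\phi_{n+1})$. By construction of $\NN^\psi_\BB$, the formula $\phi_{n+1}$ is $\TT$-satisfiable, hence also $\TT^*$-satisfiable, and by the invariant the existential formula $\exists \vec V_0,\dots,\vec V_n.\,H(\sigma_\pi,w_\pi)$ is $\TT^*$-satisfiable. Because $H(\sigma_\pi,w_\pi)$ is itself an existential $\Sigma$-constraint, \remref{TT*} lets me transfer satisfiability back to $\TT$, yielding $(M,\nu)$ with $M\in\TT$ and $\nu \models_M H(\sigma_\pi,w_\pi)$. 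Projecting the NFA-component of $\pi$ gives an accepting run of $\NFA$ on $w_\pi$; applying \lemref{abstraction}\,(2) then delivers that $\rho(M,\nu,\sigma_\pi)$ is a $\TT$-witness for $\psi$.

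For part~(2), I would go in the opposite direction. Starting from a $\TT$-witness $(M,\rho)$, \lemref{abstraction}\,(1) supplies a word $w=\langle\varsigma_0,\dots,\varsigma_n\rangle$ accepted by $\NFA$ and consistent with $\rho$, together with an NFA run $q_0\goto{\varsigma_0}q_1\goto{\varsigma_1}\cdots\goto{\varsigma_n}q_{n+1}$ ending in a final state, such that $\nu(\rho)\models_M H(\sigma(\rho),w)$. I then construct the corresponding path in $\NN^\psi_\BB$ inductively: at each stage $k$, the prefix of $\nu(\rho)$ witnesses the satisfiability of $\update(\phi_k,t_{k+1})\wedge \varsigma_{k+1}$ in $\TT$ (this uses the invariant from the previous paragraph to go between the existential history and $\phi_k$, together with \remref{TT*} to move between $\TT$ and $\TT^*$), so the defining satisfiability condition of \defref{product} is met and the next state $(q_{k+1},\phi_{k+1})$ is in $P$. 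After $n{+}1$ steps we reach $(q_{n+1},\phi_{n+1})$ with $q_{n+1}$ final, which by definition is a final state of $\NN^\psi_\BB$.

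The genuine difficulty is the invariant itself, specifically reconciling two ``misalignments'': the offset between constraints labelling \emph{states} in the NFA semantics and constraints labelling \emph{transitions} in $\BB$ (handled by the dummy first edge labelled $t=\top$ in clause~$(i)$ of \defref{product}) and the bookkeeping of variable renamings $\vec V\mapsto \vec V_k$ when unfolding $\update$. Once the invariant is stated carefully, both directions reduce to a combination of \lemref{abstraction}, QE in $\TT^*$, and \remref{TT*} for existential-formula satisfiability transfer.
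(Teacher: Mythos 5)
Your proposal is correct and follows essentially the same route as the paper: your ``semantic invariant'' is precisely the paper's \lemref{product}, which characterizes the formula component of any path as the history constraint $\hist(\sigma_\pi,w_\pi)$ up to $\TT^*$-equivalence and is proved by the same induction on path length. Both directions then reduce, exactly as you describe, to combining that lemma with \lemref{abstraction} and the $\TT$/$\TT^*$ satisfiability transfer of \remref{TT*} (the paper routes part~(2) through \thmref{T:witness} first, but your direct transfer at each step is an inessential variation).
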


\begin{example}
\label{exa:product}
The product of $\BB$ from \exaref{simple} with the NFA $\NFA$ from \exaref{nfa} for $\psi := (x\,{\geq}\,0) \U (s\,{=}\,\m o_2 \wedge x\,{=}\,4)$ is as shown next.
We abbreviate $\varsigma_0 = \{x\geq 0\}$ and $\varsigma_1 = \{s{=}\m o_2, x{=}4\}$.\\
\begin{tikzpicture}[node distance=25mm,>=stealth', yscale=.8]
\tikzstyle{node} = [draw,rectangle split, rectangle split parts=2,rectangle split horizontal, rectangle split draw splits=true, inner sep=3pt, scale=.7, rounded corners]
\tikzstyle{goto} = [->]
\tikzstyle{accepting path} = [red!80!black, line width=.8pt]
\tikzstyle{accepting state} = [double]
\tikzstyle{action}=[scale=.6, black]
\tikzstyle{constr}=[scale=.5, black]
\node[node] (0) at (0,0)  {\pcnoder{$\psi$}{$s \eqn \m o_1 \wedge x\,{=}\,0 \wedge y\,{=}\,\m a$}};
\node[node] (1) at (0,-1) {\pcnoder{$\psi$}{$s \eqn \m o_1 \wedge x\,{=}\,0 \wedge y\,{=}\,\m a$}};
\node[node] (2) at (0,-2) {\pcnoder{$\psi$}{$s \eqn \m o_2 \wedge x\,{\geq}\,0 \wedge y\,{=}\,\m a \wedge \m R(x,y)$}};
\node[node, accepting state]  (3) at (4,0) {\pcnoder{$\top$}{$s \eqn \m o_2 \wedge x\,{=}\,4 \wedge y\,{=}\,\m a \wedge \m R(x,y)$}};
\node[node] (4) at (0,-3) {\pcnoder{$\psi$}{$s \eqn \m o_1 \wedge \m P(y) \wedge x\,{\geq}\,0$}};
\node[node,] (6) at (4,-2) {\pcnoder{$\psi$}{$s \eqn \m o_2 \wedge \m P(y) \wedge x\,{\geq}\,0 \wedge \m R(x,y)$}};
\node[node] (8) at (4,-1) {\pcnoder{$\top$}{$s \eqn \m o_1 \wedge \m P(y) \wedge x\,{=}\,4$}};
\node[node,  accepting state] at (4,-4) (9) {\pcnoder{$\top$}{$s \eqn \m o_2 \wedge x\,{>}\,4 \wedge \m P(y)\wedge \m R(x,y)$}};
\node[node] at (0,-4) (10) {\pcnoder{$\top$}{$s \eqn \m o_1 \wedge x\,{>}\,4 \wedge \m P(y)$}};
\node[node, accepting state] (5) at (4,-3)  {\pcnoder{$\top$}{$s \eqn \m o_2 \wedge x\,{=}\,4 \wedge \m P(y)\wedge \m R(x,y)$}};
\draw[goto, accepting path] (0) to node[action, left=2mm]{$\top, \varsigma_0$} (1);
\draw[goto] (1) to node[action, left=2mm]{$\setx, \varsigma_0$} (2);
\draw[goto, accepting path] (1.20) to node[action, right=2mm, yshift=-1mm]{$\setx$, $\varsigma_1$} (3.185);
\draw[goto] (2) to node[action, left]{$\sety, \varsigma_0$} (4);
\draw[goto, bend left=10] (4.11) to node[action, left=3mm]{$\setx, \varsigma_0$} (6.185);
\draw[goto, bend left=10] (6.187) to node[action, right=3mm]{$\sety, \varsigma_0$} (4.9);
\draw[goto] (4) to node[action, below,yshift=1mm]{$\setx,\varsigma_1$} (5);
\draw[goto] (6) to node[action, right, near start]{$\sety$, $\varsigma_1$} (8);
\draw[goto,rounded corners] (5) -- node[action, right=-1mm, yshift=-1mm]{$\sety$} ($(5) + (2,0)$)   |- (8.-2);
\draw[goto] (2.8) to node[action, left=3mm, near end]{$\sety$, $\varsigma_1$} (8.185);
\draw[goto,rounded corners] (8.2) node[action, above, xshift=3mm, yshift=0mm]{$\setx$} -|  ($(9) + (2.5,0)$) -- (9);
\draw[goto] (3) to node[action, right]{$\sety$} (8);
\draw[goto] (9.179) to node[action, above]{$\sety$} (10.1);
\draw[goto] (10.-2) to node[action, below]{$\setx$} (9.182);
\end{tikzpicture}\\
By \thmref{model:checking} (1), witnesses can be obtained from all paths to
accepting states. E.g., from the path $\pi$ shown in red we extract the single-step
transition sequence $\sigma = \langle \setx\rangle$ and the word
$w = \langle \{x\,{\geq}\,0\}, \{s{=}\m o_2,x{=}4\}\rangle$, with $H(\sigma,w)$ as shown below
\defref{H}; it is $\TT$-satisfied by any model $M$ s.t. $(4, \m a) \in \m R_M$ and
$\nu(s_0)\eqn\m o_1$, $\nu(s_1)\eqn\m o_2$, $\nu(x_0)\eqn 0$, $\nu(x_1) \eqn 4$, and $\nu(y_0)\eqn\nu(y_1) \eqn \m a$. The witness $\rho(M, \nu, \sigma)$ is thus
$\{s\eqn \m o_1, x\eqn 0, y\eqn\m a\} \goto{\setx} \{s\eqn \m o_2, x\eqn 4, y\eqn\m a\}$.
\end{example}

In general, the product construction $\smash{\NN_\BB^\psi}$ from \defref{product} can be infinite,
but 
\thmref{model:checking} gives rise to a semi-decision procedure:
$\smash{\NN_\BB^\psi}$ can be approximated by expanding \defref{product} in a fair way. Then an accepting path $\pi$ will be detected if it exists, so that a witness can be constructed.

\paragraph{Correctness and termination.}
To prove~\thmref{model:checking}, and show that the semi-decision procedure actually serves as a decision procedure in notable cases, we employ some additional notions. We do so using ideas from \cite{FMW22a}, with the key difference that reasoning is delegated to $\TT^*$, and moreover
we have no control states.

Let $\sigma = \langle t_1, \dots, t_n\rangle$ be a transition sequence and $w\in \Theta^*$ have length $n{+}1$. We 
define the \emph{history constraint} %
$\hist(\sigma,w) := (\exists \vec V_0 \dots \vec V_{n-1}.\: H(\sigma,w))(\vec V)$, which is a formula with free variables $V$.
%
%
%
%
%
The next result formalizes in which sense
$\smash{\NN_\BB^\psi}$ is a
product construction: a path $\pi$ of the form \eqref{eq:path} combines $\sigma_\pi$ with $w_\pi$
and the last formula in $\pi$ is equivalent to the history constraint $\hist(\sigma_\pi,w_\pi)$.

\begin{restatable}{lemma}{productlemma}
\label{lem:product}
$\smash{\NN_\BB^\psi}$ has a non-empty path $\pi$ to a node $(q,\phi)$ iff
$\NN_\psi$ has a transition sequence ending in $q$ labeled $w_\pi$ and
$\BB$ a transition sequence
$\sigma_\pi$ such that
$\phi \equiv_{\TT^*} \hist(\sigma_\pi,w_\pi)$ and $\phi$ is $\TT^*$-satisfiable.
\end{restatable}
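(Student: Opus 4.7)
The plan is to prove both directions of the biconditional simultaneously by induction on the length of the path $\pi$, i.e.\ on the number $n$ of non-dummy transitions in $\sigma_\pi$. The key observation is that \defref{product} is essentially a symbolic simulation: each edge of $\smash{\NN_\BB^\psi}$ composes the current symbolic state formula $\varphi$ with one $\trans{t}$-step via $\update$ and one constraint $\varsigma$ from $\NFA$, and this composition exactly corresponds to appending one block to $H(\sigma,w)$ and then existentially projecting onto the last variable copy.

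For the forward direction, I would proceed by induction on $n$. In the base case $n=0$, the path is the single edge $(q_0, \phiinit) \goto{\top,\varsigma_0} (q_1, \phi_1)$; here $\sigma_\pi = \langle\rangle$ and $w_\pi = \langle \varsigma_0\rangle$, and unfolding $\update(\phiinit,\top) = \exists \vec X.\phiinit(\vec X)\wedge \vec V \eqn \vec X$, which is $\TT^*$-equivalent to $\phiinit(\vec V)$, gives $\phi_1 \equiv_{\TT^*} \phiinit(\vec V) \wedge \varsigma_0(\vec V) = \hist(\langle\rangle,\langle\varsigma_0\rangle)$. The NFA transition $q_0 \goto{\varsigma_0} q_1$ is guaranteed by case $(i)$ of \defref{product}, and $\TT^*$-satisfiability comes from the construction (using \remref{TT*} to move between $\TT$- and $\TT^*$-satisfiability of the existential formula $\update(\varphi,t)\wedge\varsigma$). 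For the inductive step, extend a path of length $n$ by one edge $(q_n,\phi_n)\goto{t_n,\varsigma_n}(q_{n+1},\phi_{n+1})$; by IH, $\phi_n \equiv_{\TT^*} \hist(\sigma, w)$ for $\sigma=\langle t_1,\dots,t_{n-1}\rangle$ and $w=\langle\varsigma_0,\dots,\varsigma_{n-1}\rangle$. Then
\[
\phi_{n+1} \equiv_{\TT^*} \update(\hist(\sigma,w), t_n) \wedge \varsigma_n = \exists \vec X.\,\hist(\sigma,w)(\vec X) \wedge \trans{t_n}(\vec X,\vec V) \wedge \varsigma_n(\vec V).
\]
Renaming $\vec X$ to $\vec V_{n-1}$ and pulling the $\exists \vec V_0\dots \vec V_{n-2}$ hidden inside $\hist(\sigma,w)(\vec X)$ outside gives exactly $(\exists \vec V_0\dots \vec V_{n-1}.\,H(\sigma_\pi,w_\pi))(\vec V) = \hist(\sigma_\pi,w_\pi)$, as required.

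For the backward direction, I would again induct on $n$, using the same equivalence in the opposite direction: given transition sequences in $\BB$ and $\NFA$ with $\hist(\sigma_\pi,w_\pi)$ $\TT^*$-satisfiable, I build the path edge by edge in $\smash{\NN_\BB^\psi}$. At each step $i$, the IH supplies a path ending at some $(q_i,\phi_i)$ with $\phi_i \equiv_{\TT^*} \hist(\sigma_i, w_i)$ for the length-$i$ prefixes; then $\update(\phi_i,t_{i+1}) \wedge \varsigma_{i+1} \equiv_{\TT^*} \hist(\sigma_{i+1}, w_{i+1})$ by the calculation above, and this is $\TT^*$-satisfiable because any $(M,\nu)$ satisfying $H(\sigma_\pi,w_\pi)$ restricts to a satisfier of $H(\sigma_{i+1},w_{i+1})$ (just drop the later variable copies), whence its $\exists$-projection $\hist(\sigma_{i+1},w_{i+1})$ is also satisfiable. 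By \remref{TT*}, $\TT^*$-satisfiability of the existential formula $\update(\phi_i,t_{i+1})\wedge\varsigma_{i+1}$ coincides with $\TT$-satisfiability, so the side condition of \defref{product} is met and the edge exists.

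The main obstacle I expect is the quantifier bookkeeping in the inductive step: one must verify carefully that the existential quantifier introduced by $\update$ merges cleanly with the existential block hidden inside $\hist$, and that the renaming of variables $\vec V \mapsto \vec V_{n-1}$ is performed consistently. Beyond that, the argument is essentially mechanical, with \remref{TT*} and QE of $\TT^*$ invoked silently to justify that the quantifier-free $\varphi'$ demanded by \defref{product} actually exists and that $\TT$- vs.\ $\TT^*$-satisfiability do not need to be distinguished for the existential formulas in play.
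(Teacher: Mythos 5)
Your proposal is correct and follows essentially the same route as the paper's proof: both directions are handled by induction on the length of the path (resp.\ of the transition sequence), with the key step being the identity $\update(\hist(\sigma',w'),t_n)\wedge\varsigma_n \equiv_{\TT^*} \hist(\sigma_\pi,w_\pi)$ and \remref{TT*} bridging $\TT$- and $\TT^*$-satisfiability of the existential side conditions. Your explicit observation that a satisfying assignment of $H(\sigma_\pi,w_\pi)$ restricts to one of each prefix (needed to feed the induction hypothesis in the backward direction) is a detail the paper leaves implicit, but it does not change the argument.
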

\noindent
Both directions are by plain induction.
Now we can show: 
\begin{proof}[Proof (of \thmref{model:checking})]
(1)
Let $\pi$ be a path to a final state $(q_n,\phi)$ in $\smash{\NN_\BB^\psi}$.
So $q_n$ is final in $\NFA$, and we cannot have  $q_n=q_0$  
as $\psi \neq \top$.
By \lemref{product} ($\Longrightarrow$), there is a transition sequence
in $\NN_\psi$ labeled $w_\pi=\langle\varsigma_0 \dots\varsigma_n\rangle$ and $\hist(\sigma_\pi,w_\pi)$ is $\TT^*$-satisfiable.
So also $H(\sigma_\pi,w_\pi)$ must be $\TT^*$-satisfiable,
and by \remref{TT*} also $\TT$-satisfied by some $M\in \TT$ and $\nu$.
By \lemref{abstraction} (2), $\rho(M, \nu, \sigma)$ is a witness for $\psi$. 
%
(2) 
If there is a $\TT$-witness for $\psi$, by \thmref{T:witness}, there is a $\TT^*$-witness $(M,\rho)$.
By \lemref{abstraction} (1), there is a word $w$ consistent with $\rho$ and accepted by $\NFA$ such that $H(\sigma(\rho),w)$ is $\TT$-satisfiable. 
By \lemref{product} ($\Longleftarrow$), the accepting transition sequence of $\NN_\psi$ labeled $w$ and $\sigma(\rho)$ give rise to a path $\pi$ in $\smash{\NN_\BB^\psi}$
such that $w\eqn w_\pi$ and $\sigma(\rho)\eqn \sigma_\pi$.
As $\NN_\psi$ accepts $w$, $\pi$ leads to a final state.
\end{proof}

\noindent
We next state an abstract decidability criterion.

\begin{definition}
\label{def:history:set}
A \emph{data history} for $(\BB, \psi)$ is a minimal
set of quantifier-free formulae $\Phi$ such that
for every 
transition sequence $\sigma$ of $\BB$ and $w\in \Theta^*$,
there is some $\phi \inn \Phi$ such that $\hist(\sigma,w)\,{\equiv_{\TT^*}}\, \phi$.
\end{definition}

Intuitively, a \historyset has a formula representation of every pair of a transition sequence in $\BB$, and a sequence of constraints needed to verify $\psi$.
As formulas in $\smash{\NN_\BB^\psi}$ are history constraints (\lemref{product}), 
if $(\BB, \psi)$ has \theproperty then a finite $\smash{\NN_\BB^\psi}$ exists, so
by \thmref{model:checking} we have decidability:

\begin{corollary}
\label{cor:main}
The verification task is decidable
if $(\BB, \psi)$ has a \theproperty.
\end{corollary}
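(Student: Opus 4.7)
The plan is to combine \thmref{model:checking} with the finiteness of the \historyset, so that the product automaton $\smash{\NN_\BB^\psi}$ can be constructed effectively in finitely many steps; soundness and completeness of answering the verification task then follow immediately from \thmref{model:checking}(1,2).

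First I would show that $\smash{\NN_\BB^\psi}$ has only finitely many states up to $\TT^*$-equivalence on the formula component. By \lemref{product}, every non-initial node $(q,\phi)$ of $\smash{\NN_\BB^\psi}$ satisfies $\phi \equiv_{\TT^*} \hist(\sigma_\pi, w_\pi)$ for some transition sequence $\sigma_\pi$ of $\BB$ and word $w_\pi$ read off a path $\pi$ leading to that node. Since $\Phi$ is a \historyset for $(\BB,\psi)$, each such history constraint is in turn $\TT^*$-equivalent to some element of the finite set $\Phi$. Combined with the finiteness of the state set $Q$ of $\NFA$, this means the quotient of $\smash{\NN_\BB^\psi}$ under equality on the first component and $\TT^*$-equivalence on the second has at most $|Q|\cdot(|\Phi|+1)$ states.

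Next I would argue that a single expansion step of \defref{product} is effective, so that a fair enumeration saturates in finitely many iterations. The formula $\xi := \update(\phi,t)\wedge \varsigma$ is existential; its $\TT$-satisfiability coincides with $\TT^*$-satisfiability by \remref{TT*} and is decidable via Assumption~(a). A quantifier-free $\phi'$ with $\phi'\equiv_{\TT^*}\xi$ is obtainable from Assumption~(b): either directly via QE of $\TT$, or through cover computation in $\TT^*$, which by construction has QE. Finally, $\TT^*$-equivalence between two quantifier-free formulas reduces to unsatisfiability of a quantifier-free formula, hence again to decidable $\TT$-satisfiability via \remref{TT*}. Therefore we can recognise when a freshly generated pair $(q',\phi')$ already matches, up to equivalence, a state already present.

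Putting the pieces together, a fair construction of $\smash{\NN_\BB^\psi}$ modulo $\TT^*$-equivalence stabilises after finitely many expansions, yielding a finite graph on which reachability of a final state is a purely combinatorial check; a final state witnesses existence of a $\TT$-witness by \thmref{model:checking}(1), and its absence rules out any $\TT$-witness by \thmref{model:checking}(2). The anticipated main obstacle is precisely the bookkeeping that keeps the construction quotiented by $\TT^*$-equivalence rather than by syntactic identity: without collapsing $\TT^*$-equivalent formulas the loop in \defref{product} can keep producing syntactically distinct but semantically equal states and fail to terminate, which is exactly why decidable $\TT^*$-equivalence of quantifier-free formulas (secured by Assumptions~(a) and~(b)) is indispensable.
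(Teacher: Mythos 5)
Your proposal is correct and follows essentially the same route as the paper: the paper's (one-sentence) argument is precisely that \lemref{product} makes every state formula of $\smash{\NN_\BB^\psi}$ a history constraint, so a finite \historyset forces a finite product automaton, and \thmref{model:checking} then reduces the verification task to reachability of a final state. You additionally spell out the effectiveness of each expansion step (decidable satisfiability and $\TT^*$-equivalence checks via Assumptions (a), (b) and \remref{TT*}), which the paper leaves implicit but which is a faithful elaboration rather than a different argument.
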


\noindent
%
%
In the next section, we use \corref{main} to identify concrete decidable classes of DMTs that involve databases and arithmetic, but
our approach is not limited to this domain. This is illustrated by \exaref{lists} in the appendix, it uses the theory of lists.

\section{Decidability Criteria}
\label{sec:decidable}

We now use \corref{main} to identify concrete, checkable classes of {\plainmydds}s where our verification task is decidable.
We focus on systems that query a read-only database, which are
crucial to BPM and database processes \cite{BojanczykST13,CGM13,DHLV18};
and extend this setting with arithmetic,
a key combination in practice.
Note that we address a parameterized verification problem, checking whether a witness exists for some instantiation of the initial DB.

Let a \emph{\DBDDS} be a \mydds where $\Sigma$ consists of arbitrary relation, and unary function symbols, and let its theory $\TT_{db}$ be EUF possibly extended with a set of ground facts to model an initial DB. This theory 
is an expressive model for databases,
where unary function symbols can capture primary keys; and it
enjoys model completion~\cite{CGGMR20}. 
%
In a \emph{\DBDDS with arithmetic}, $\Sigma$ supports in addition operations from an arithmetic theory $\TT_{ar}$ such as LIRA,
and its theory is $\TT:=\TT_{db}\cup \TT_{ar}$.
\exasref{intro}{simple} are examples for {\DBDDS}s with arithmetic.
To ensure that the combined theory has model completion, we assume that the signature is \emph{tame},
which intuitively means that no uninterpreted function symbol has an arithmetic domain (see below).

In general, both the DB and the arithmetic perspective can give rise to infinitely many terms, so in classes of \mydds{s} where the verification task is decidable, both must be suitably restricted. Below, we explore different ways to do so.

\paragraph{I. Acyclic signature.}
First, we consider {\DBDDS}s without arithmetic.
Acyclicity of the signature $\Sigma$ is defined via the \emph{sort graph} of $\Sigma$, which
has as node set the sorts $\SS$, and an edge from $s$ to $s'$ iff there is a function symbol $f\colon s \to s'$ in $\Sigma$ (as we have only unary functions); 
we say that $\Sigma$ is \emph{acyclic} if the sort graph is so. In this case, only finitely many $\Sigma$-terms exist~\cite{ARS2010}. Thus, there are also only finitely many non-equivalent quantifier-free formulas over the finite set of variables $V$,
so finitely many history constraints. Hence if $\Sigma$ is acyclic, every \DBDDS has a finite \historyset, so by \corref{main}:

\begin{theorem}
\label{thm:acyclic:SAS}
The verification task for a \DBDDS is decidable if $\Sigma$ is acyclic.
\end{theorem}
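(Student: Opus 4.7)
The plan is to instantiate \corref{main} by exhibiting, for every pair $(\BB, \psi)$ with $\BB$ a \DBDDS over an acyclic signature, a finite \historyset. The argument proceeds in three steps.

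First, I would observe that under acyclicity the set of $\Sigma$-terms over any fixed finite variable set is finite. Since in a \DBDDS the only function symbols are unary (apart from constants), every term has the shape $f_k(f_{k-1}(\dots f_1(z)\dots))$ for a chain $f_1,\dots,f_k$ of unary functions whose successive codomains trace a path in the sort graph, with $z$ a variable or constant. Because the sort graph has no cycles its longest path is bounded, so only finitely many such chains exist; combined with the finite variable set $V$ and the finitely many constants of $\Sigma$, this yields a finite pool of $\Sigma$-terms over $V$. This is exactly the observation cited from \cite{ARS2010} in the paragraph preceding the theorem.

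Second, from a finite term pool over $V$ one gets only finitely many atoms of the form $p(t_1,\dots,t_k)$ or $t_1\eqn t_2$, and therefore only finitely many quantifier-free $\Sigma$-formulas over $V$ up to propositional equivalence, and a fortiori up to $\TT^*$-equivalence. Third, by Assumptions~($\star$)(b), $\TT^*$ has QE, so every history constraint $\hist(\sigma,w)=(\exists \vec V_0\dots \vec V_{n-1}.\,H(\sigma,w))(\vec V)$, which is an existential formula with free variables in $V$, is $\TT^*$-equivalent to some quantifier-free $\Sigma$-formula over $V$. Choosing one representative per $\TT^*$-equivalence class from step two gives a finite set $\Phi$ meeting the condition of \defref{history:set}, so $(\BB,\psi)$ admits a \theproperty and decidability follows from \corref{main}.

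The main obstacle is step three: one needs to be sure that after QE the resulting quantifier-free formula indeed lives in the finite universe of formulas identified in step one. This is not automatic in general (QE could in principle introduce new terms), but acyclicity is doing the heavy lifting: because there are only finitely many $\Sigma$-terms over $V$ (and constants) in the whole signature, any quantifier-free representative of $\hist(\sigma,w)$ can only mention atoms drawn from this finite pool, hence the number of $\TT^*$-equivalence classes remains finite regardless of which particular QE procedure is used. Once this is in place, the rest is a direct appeal to \corref{main}.
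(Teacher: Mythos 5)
Your proposal is correct and follows essentially the same route as the paper: acyclicity bounds the set of $\Sigma$-terms over $V$, hence the number of non-$\TT^*$-equivalent quantifier-free formulas over $V$, hence the number of history constraints, giving a finite \historyset and decidability via \corref{main}. The paper's own argument is just the terse paragraph preceding the theorem statement; your additional care about QE not escaping the finite term pool is a correct elaboration of the same idea rather than a different approach.
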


This generalizes the result about decidability of reachability in SASs over an acyclic $\Sigma$ \cite[Thm 4.2]{CGGMR20}, as we consider here full LTL$_f$ model checking.
An example for a respective process is shown in \exaref{acyclic}.

\paragraph{II. Acyclic signature and monotonicity constraints.}
Next, we consider {\DBDDS}s with arithmetic $\BB$.
As mentioned above, in the sequel we assume  that signature $\Sigma$ is \emph{tame}, 
i.e., in the sort graph of $\Sigma$, sorts $\mathit{rat}$ and $\mathit{int}$ are leaves.
In this case, the combined theory $\TT$ is known to enjoy
model completion ~\cite[Thm. 7]{CalvaneseGGMR22}.
E.g., the signatures in \exasref{intro}{simple} are tame.
%
Moreover, towards decidability
we restrict arithmetic constraints to
\emph{monotonicity constraints} (MCs) - constraints of the form $t \odot t'$ where $t,t'$ are $\Sigma$-terms of sort $rat$ over free variables $V$ 
and $\odot$ is one of $=$, $\neq$, $\leq$, or $<$.
MCs have been often considered in the literature~\cite{DD07,FLM19} and are important in BPM as guards of this shape can be learned from data~\cite{LeoniA13}.
%
\begin{restatable}{theorem}{theoremacyclicMC}
\label{thm:MC:acyclic}
Let $\Sigma$ be acyclic,
$\BB$ a \DBDDS with arithmetic 
and $\psi \in \LL_\Sigma$.
If all arithmetic constraints in $\BB$ and $\psi$ are MCs, 
the verification task is decidable.
\end{restatable}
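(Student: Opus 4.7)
The plan is to invoke \corref{main} and exhibit a finite \historyset for $(\BB,\psi)$. Since $\smash{\NN^\psi_\BB}$ carries in its nodes history constraints $\hist(\sigma_\pi,w_\pi)$ (\lemref{product}), it suffices to argue that, up to $\TT^*$-equivalence, only finitely many quantifier-free formulas can arise as such history constraints when $\Sigma$ is acyclic and every arithmetic atom in $\BB$ and $\psi$ is an MC.

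First I would bound the syntactic building blocks. Because $\Sigma$ is acyclic and all function symbols are unary, the set $\mathit{Ter}(V)$ of $\Sigma$-terms over the finite variable set $V$ is finite (as in \thmref{acyclic:SAS}). Consequently there are only finitely many EUF-atoms $p(\vec s)$ and finitely many equalities $s\,{=}\,s'$ over $\mathit{Ter}(V)$, and also finitely many potential MC-atoms $t\,\odot\,t'$ with $t,t'$ of sort $\mathit{rat}$ in $\mathit{Ter}(V)$ and $\odot\in\{=,\neq,\leq,<\}$. Let $\AA_V$ be this finite atom set. Then the Boolean closure of $\AA_V$ has finitely many equivalence classes modulo $\TT^*$, because equivalence collapses all Boolean combinations on a fixed finite atom set into finitely many classes.

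The main step, and the hardest part, is to show that $\hist(\sigma,w)$ always admits a $\TT^*$-equivalent representative in this Boolean closure of $\AA_V$. Tameness of $\Sigma$ ensures that $\TT=\TT_{db}\cup\TT_{ar}$ admits a model completion obtained by combining covers for EUF and QE for LIRA~\cite{CalvaneseGGMR22}. Unfolding $H(\sigma,w)$ introduces fresh copies $\vec V_0,\dots,\vec V_{n-1}$, but all terms appearing in the unfolded formula are instances (via renaming) of the finitely many terms in $\mathit{Ter}(V_0\cup\cdots\cup V_n)$, and crucially all arithmetic atoms are MCs because both transition formulas in $T$ and constraints in $\psi$ only use MCs, and the extended transition relation $\trans{t}$ only adds equalities $v^w\,{=}\,v^r$, which are MCs. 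The elimination of the intermediate variables $\vec V_0,\dots,\vec V_{n-1}$ must therefore be shown to preserve membership in MCs on the arithmetic side: for the EUF component the cover produces only equalities and disequalities among the finitely many terms of $\mathit{Ter}(V)$; for the LIRA component, Fourier--Motzkin-style elimination applied to a set of MCs yields MCs, because resolving $a\leq x$ against $x\leq b$ produces $a\leq b$, and analogously for strict inequalities and equalities, without ever synthesising a non-trivial linear combination. The combined procedure in the tame setting propagates equalities between the two components but never introduces arithmetic atoms outside the MC fragment over $\mathit{Ter}(V)$. I would carefully verify this invariant by induction on the elimination steps of the combined cover algorithm.

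Granted this invariant, every $\hist(\sigma,w)$ is $\TT^*$-equivalent to a Boolean combination of atoms in $\AA_V$, of which there are finitely many equivalence classes. Hence a finite \historyset for $(\BB,\psi)$ exists, and \corref{main} yields decidability. The anticipated obstacle is strictly the MC-preservation claim for the combined LIRA$+$EUF cover; the acyclicity argument and the appeal to \corref{main} are then routine.
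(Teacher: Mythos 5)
Your proposal is correct and follows essentially the same route as the paper's proof: exhibit a finite \historyset by using acyclicity to bound the EUF terms and atoms, and arguing that quantifier elimination in the tame combination keeps the arithmetic part within the MC fragment over those finitely many terms, then conclude via \corref{main}. The one step you flag as the obstacle—MC-preservation under the combined cover computation—is exactly the point the paper discharges by inspecting the output shape $\phi_1(\vec X)\wedge\phi_2(\vec Y,\vec t(\vec X))$ of $\m{TameCombCover}$ and citing the existing QE-for-MCs results, so your plan to verify it by induction on the elimination steps is the intended argument rather than a gap.
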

\begin{proof}[Proof (idea)]
We inspect procedure $\m{TameCombCover}$ \cite[Sec. 8]{CalvaneseGGMR22} showing that it has finitely many possible results of the form $\phi_1(\vec X) \wedge \phi_2(\vec Y, \vec t(\vec X))$, up to equivalence. Here $\phi_1$ is an EUF formula, $\vec t$ is a list of EUF terms, and $\phi_2$ an arithmetic formula. Acyclicity bounds the possibilities for $\phi_1$ and $\vec t$, and by results about QE of MCs, there are only finitely many choices for $\phi_2(\vec Y, \vec t(\vec X))$\cite[Thm. 5.2]{FMW22a}.
\end{proof}

\noindent
E.g., \thmref{MC:acyclic} applies to \exaref{simple}, but not to \exaref{intro} (not all constraints are MCs).
\thmref{MC:acyclic} generalizes \cite[Thm. 5.2]{FMW22a} as it supports arithmetic \emph{and} EUF.

\paragraph{III. Local Finiteness.}
Next, we consider decidability beyond acyclic $\Sigma$ for \DBDDS without arithmetic, using the concept of local finiteness~\cite{Lipparini82}.
A $\Sigma$-theory $\TT$ is called \emph{$k$-locally finite} if  for any finite set of variables $X$, the number of $\Sigma$-terms with free variables $X$ is upper-bounded by some $k$, up to $\TT$-equivalence.
As  local finiteness directly ensures that there are only finitely many non-equivalent quantifier-free formulas in $\TT^*$, we readily get:

\begin{theorem}
\label{thm:locally:finite}
If $\TT$ is a $k$-locally finite $\Sigma$-theory that admits model completion, the $\TT$-verification task is decidable for any \DBDDS $\BB$ and
 $\Sigma$-property $\psi$.
\end{theorem}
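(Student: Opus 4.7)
The plan is to derive the result as an immediate consequence of Corollary~\ref{cor:main}: it suffices to show that under $k$-local finiteness of $\TT$, every pair $(\BB, \psi)$ with $\BB$ a \DBDDS and $\psi \in \LL_\Sigma$ admits a finite \historyset. Once this is established, Corollary~\ref{cor:main} delivers decidability directly.

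To establish the existence of a finite \historyset, I would first recall that, since $\TT$ admits a model completion $\TT^*$ (Assumption~$(\star)$(b)), $\TT^*$ enjoys QE. Each history constraint $\hist(\sigma,w) = (\exists \vec V_0 \dots \vec V_{n-1}.\, H(\sigma,w))(\vec V)$ has free variables only in the finite set $V$, so by QE in $\TT^*$ it is $\TT^*$-equivalent to a quantifier-free $\Sigma$-formula over $V$. Hence every history constraint lies, modulo $\TT^*$-equivalence, in the set of quantifier-free $\Sigma$-formulas with free variables in $V$, and it suffices to bound the size of this set modulo $\TT^*$-equivalence.

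Next I would exploit $k$-local finiteness of $\TT$ to bound that set. By assumption, the number of $\Sigma$-terms over $V$ is bounded (up to $\TT$-equivalence). Combined with the fact that only finitely many predicate and function symbols of $\Sigma$ occur in $\BB$ and $\psi$ (and QE in $\TT^*$ stays within the signature $\Sigma$ of $\TT$, since $\TT^*$ is defined over the same signature), this yields finitely many atoms, hence finitely many Boolean combinations, up to $\TT$-equivalence on $V$. Since $\TT \subseteq \TT^*$, any $\TT$-equivalence of quantifier-free formulas lifts to $\TT^*$-equivalence, so there are only finitely many $\TT^*$-equivalence classes of quantifier-free formulas over $V$. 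Hence $\{\hist(\sigma,w)\}$ gives rise to finitely many equivalence classes modulo $\TT^*$, which is exactly a finite \historyset, and Corollary~\ref{cor:main} concludes.

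The main subtlety I foresee is the signature bookkeeping: one must verify that the QE-equivalents in $\TT^*$ do not introduce function or predicate symbols outside the finite sub-signature actually relevant to $\BB$ and $\psi$. This follows from $\TT^*$ sharing the signature of $\TT$, but it is the point most likely to hide implicit assumptions; once it is addressed, the proof is a direct composition of QE in $\TT^*$, $k$-local finiteness, and Corollary~\ref{cor:main}.
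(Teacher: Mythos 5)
Your proposal is correct and follows essentially the same route as the paper: the paper justifies the theorem in one line by observing that local finiteness yields only finitely many non-equivalent quantifier-free formulas in $\TT^*$, hence a finite \historyset, and then invokes \corref{main}. Your expansion of this argument (QE in $\TT^*$ to reduce history constraints to quantifier-free formulas over $V$, lifting $\TT$-equivalence of terms to $\TT^*$-equivalence, and the signature bookkeeping) just makes explicit the steps the paper leaves implicit.
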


This generalizes \cite[Thm. 5]{BojanczykST13} to LTL$_f$, since the Fra\"{\i}ss\'{e} classes from  \cite{BojanczykST13}, when FO definable, coincide with universal, locally finite theories admitting a model completion \cite{CGGMR20}.
One way to ensure that a model completion exists in \thmref{locally:finite} is by restricting axioms in $\TT$ to single-variable 
sentences: 

\begin{corollary}
\label{cor:locally:finite}
The $\TT$-verification task is decidable for a \DBDDS $\BB$ and
 $\Sigma$-property if $\TT$ is a $k$-locally finite $\Sigma$-theory axiomatized by single-variable axioms.
\end{corollary}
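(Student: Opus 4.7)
The plan is to deduce this corollary directly from Theorem~\ref{thm:locally:finite}, which already furnishes decidability whenever the theory $\TT$ admits a model completion. Since $k$-local finiteness is already among the hypotheses, the only gap to fill is to show that a $k$-locally finite $\Sigma$-theory axiomatized by single-variable sentences automatically admits a model completion $\TT^*$.

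First I would observe that single-variable axiomatizability makes $\TT$ universal, so its class of models is closed under substructures. Next, I would argue that the class of finitely generated models of $\TT$ enjoys the amalgamation property: because every axiom has the form $\forall x.\,\phi(x)$, the truth of $\phi$ at an element depends only on the one-generated substructure it inhabits, which is preserved under a free amalgam over any common substructure. Combined with joint embedding (taking the amalgam over the smallest common substructure) and with $k$-local finiteness---which guarantees that finitely generated models are finite and fall into finitely many isomorphism types over any bounded parameter set---this is the standard Fra\"{\i}ss\'{e} setup. Its countable homogeneous limit has a theory $\TT^*$ that eliminates quantifiers and co-satisfies $\Sigma$-constraints with $\TT$, i.e.\ is precisely a model completion in the sense of \cite{Ghilardi04,CK} used earlier in the paper.

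Once $\TT^*$ is identified, Theorem~\ref{thm:locally:finite} applies verbatim and yields decidability of the $\TT$-verification task for any \DBDDS~$\BB$ and $\Sigma$-property $\psi$. The main obstacle is the amalgamation step; this is exactly where single-variable axiomatizability is essential, since a multi-variable axiom could couple elements drawn from the two factors of the amalgam and thereby fail in the free construction. The connection between locally finite, universal, single-variable-axiomatized theories and Fra\"{\i}ss\'{e} classes is classical and is spelled out (for the FO-definable case) in \cite{BojanczykST13,CGGMR20}, so I would appeal to those references rather than reproduce the Fra\"{\i}ss\'{e} construction in full.
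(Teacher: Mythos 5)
Your proposal is correct and follows essentially the same route as the paper: the corollary is obtained from Theorem~\ref{thm:locally:finite} by observing that a locally finite universal theory axiomatized by single-variable sentences admits a model completion. The paper delegates that last step entirely to the cited literature (\cite{BojanczykST13,CGGMR20}), whereas you sketch the underlying amalgamation/Fra\"{\i}ss\'{e} argument explicitly, which is a faithful expansion of the same idea rather than a different proof.
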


 \thmref{locally:finite} generalizes \thmref{acyclic:SAS} as acyclicity implies local finiteness but not vice versa: consider a signature 
where $f$ is the only non-constant function symbol and
$\TT := \{\forall x.f^2(x)=x\}$, which is locally finite but not 
acyclic.
One can also obtain decidability by replacing acyclicity in the combination result of \thmref{MC:acyclic} by local finiteness, using in 
the proof local finiteness to ensure that up to equivalence there are only finitely many formulas without arithmetic.

\paragraph{IV. Bounded lookback.}
The above criteria achieve decidability mostly by syntactic restrictions. 
However, \corref{main} can also be applied to arbitrary constraints, by controlling how transitions interact with each other.
This is the case for systems with \emph{bounded lookback}, a property
intuitively expressing that the behaviour of a \plainmydds depends only on a bounded amount of information from the past~\cite{FMW22a}; it generalizes the feedback freedom property~\cite{DDV12}.
%
%
%
Bounded lookback is defined via \emph{computation graphs}:
Let $\sigma = \langle t_1, \dots, t_n\rangle$ be a transition sequence of a
\DBDDS with arithmetic $\BB$, and $w= \langle \varsigma_0, \dots, \varsigma_n\rangle\in 2^C$.
The \emph{computation graph} for $\sigma$ and $w$ is an undirected graph
$G_{\sigma,w}$ with nodes
$\mc V := \{v_i \mid v\in V\text{, }0\,{\leq}\,i\,{\leq}\,n\}$
and an edge $(x_i, y_j)$ whenever $x_i$ and $y_j$ are in the transitive closure of variable pairs that occur in a common literal of 
$H(\sigma,w)$, for all $i,j \leq n$ (see the appendix for examples and details).
For $k\geq 0$, $(\BB, \psi)$ has \emph{$k$-bounded lookback} if for all $\sigma$ and all $w\in 2^C$ s.t. $H(\sigma,w)$ is $\TT$-satisfiable, all acyclic paths in $[G_{\sigma,w}]$ have length at most $k$.
E.g., one can check that \exaref{intro} has 5-bounded lookback, as variables get reset when the loop repeats,
so the dynamics depends only on the last five steps.
Also DMTs without loops have bounded lookback, since the length of computation graphs is bounded.
We show:

\begin{restatable}{theorem}{theoremboundedlookback}
\label{thm:bounded:lookback}
If $\BB$ is a \DBDDS with arithmetic such that
 $(\BB, \psi)$ has $k$-bounded lookback for some $k\geq 0$, the verification task is decidable.
\end{restatable}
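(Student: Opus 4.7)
The plan is to invoke Corollary \ref{cor:main} and show that, under $k$-bounded lookback, the pair $(\BB, \psi)$ admits a finite \historyset. Recall that the history constraint is $\hist(\sigma, w) = (\exists \vec V_0 \dots \vec V_{n-1}.\: H(\sigma,w))(\vec V)$, whose free variables are the last-step copy $\vec V_n$. The guiding intuition is that $k$-bounded lookback forces the existential projection to depend only on a bounded portion of $H(\sigma,w)$ surrounding $\vec V_n$, and that up to $\TT^*$-equivalence only finitely many such projections can arise.

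First, I would analyse the computation graph $G_{\sigma,w}$. Call a node $v_i$ \emph{relevant} if it is connected in $G_{\sigma,w}$ to some node of $\vec V_n$. Since every acyclic path in $G_{\sigma,w}$ has length at most $k$, a BFS from $\vec V_n$ reaches only nodes at graph-distance $\leq k$, and the time-indices of relevant variables lie in a window whose size is bounded by a function $N$ of $k$ and $|V|$. Partition the literals of $H(\sigma,w)$ into a \emph{relevant} conjunct $H_{\mathit{rel}}$, using only relevant variables, and an \emph{irrelevant} conjunct $H_{\mathit{irr}}$, using only non-relevant ones; by construction these share no variables. Hence $\exists \vec V_0\dots\vec V_{n-1}.\, H(\sigma,w)$ is $\TT^*$-equivalent to $(\exists \vec X_{\mathit{rel}}.\, H_{\mathit{rel}}) \wedge (\exists \vec X_{\mathit{irr}}.\, H_{\mathit{irr}})$; and since $H(\sigma,w)$ is assumed $\TT$-satisfiable whenever bounded lookback kicks in, the irrelevant factor is $\TT^*$-equivalent to $\top$ (any other case yields an unsatisfiable history constraint, of which there are finitely many up to equivalence, namely just $\bot$). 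Consequently $\hist(\sigma,w)$ collapses to the cover of the relevant fragment.

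Next, I would bound the number of relevant fragments up to equivalence. The relevant window determines a sub-sequence of at most $N$ transitions from $T$ and at most $N$ symbols from the finite alphabet $\Theta$; its syntactic shape, up to renaming of time-indexed variables, comes from a finite set. For each such shape, computing the cover of the associated existential formula yields a quantifier-free $\TT^*$-formula over the bounded variable set $V \cup \vec X_{\mathit{rel}}$. In the pure EUF part this gives finitely many outcomes because the term pool is bounded; in the combined tame theory I would invoke the combined-cover procedure $\m{TameCombCover}$ of \cite{CalvaneseGGMR22} exactly as in the proof of \thmref{MC:acyclic}, but now applied locally to the bounded-size relevant fragment, so that only finitely many non-equivalent quantifier-free formulas can be produced, over a bounded set of EUF and arithmetic terms.

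The main obstacle I anticipate is formalising the decoupling claim for $H_{\mathit{irr}}$, i.e.\ rigorously justifying that the irrelevant portion contributes nothing nontrivial to $\hist(\sigma,w)$ once one conditions on $H(\sigma,w)$ being $\TT$-satisfiable. A related delicate point is ensuring the combined-cover step on the relevant fragment does not reintroduce terms referring to arithmetic operations of unbounded nesting depth; unlike in \thmref{MC:acyclic} we do not restrict to monotonicity constraints, so one must rely on the boundedness of the window together with general properties of the tame combined model completion to conclude finiteness up to $\TT^*$-equivalence. Once these two points are in place, finiteness of the data history follows and Corollary \ref{cor:main} yields decidability.
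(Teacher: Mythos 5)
Your high-level plan (reduce to Corollary~\ref{cor:main} by exhibiting a finite \historyset) is the paper's plan, but the central step of your argument has a genuine gap. The $k$-bounded lookback hypothesis bounds acyclic paths in $[G_{\sigma,w}]$, the graph obtained \emph{after collapsing equality edges} — not in $G_{\sigma,w}$ itself. Equality edges, in particular the frame conditions $v^w\eqn v^r$ in $\trans{t}$, chain across arbitrarily many steps, so the nodes of $\vec V_n$ are connected in $G_{\sigma,w}$ to variables with arbitrarily old time indices (see \exaref{computation:graphs}, where $c_1$ is linked to $c_9$ by a pure equality chain, and this chain grows with the run). Hence your claims that a BFS from $\vec V_n$ stays within graph-distance $k$, that the relevant time indices lie in a window of size $N(k,|V|)$, and that the relevant fragment is determined by a sub-sequence of at most $N$ transitions are all false; the relevant fragment of $H(\sigma,w)$ is in general unbounded in size, so you cannot conclude finiteness by enumerating its "syntactic shapes." A secondary slip: the irrelevant factor $\exists\vec X_{\mathit{irr}}.H_{\mathit{irr}}$ need not be $\TT^*$-equivalent to $\top$ just because $H(\sigma,w)$ is satisfiable — satisfiability is not validity, and ground literals over signature constants (e.g.\ $\m P(\m b)$ from a guard on a variable later overwritten and disconnected) survive the projection.

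The paper avoids both problems by arguing \emph{inductively} on $\sigma$: at each step it keeps a representative of $\hist(\sigma,w)$ of quantifier depth at most $k$ whose atoms are instances of the fixed finite set $C_{ext}$ (constraints of $\psi$, transition guards, $\phiinit$) over the bounded variable pool $V\cup\vec U_1\cup\dots\cup\vec U_k$. It first substitutes a representative for each equality class (this is exactly the collapsing step you are missing, and it is what turns unbounded equality chains into a single variable), and only then uses bounded lookback to drop quantified variables not connected to the last instant in $[G_{\sigma,w}]$. Finiteness of the resulting set $\Psi$ follows purely from the bounded vocabulary, bounded variable set and bounded quantifier depth — no cover computation is needed, which also dissolves your second anticipated obstacle about unbounded arithmetic term nesting in $\m{TameCombCover}$. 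If you want to repair your write-up, replace the "bounded window" analysis by this induction with equality-class collapsing.
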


\section{Implementation}
\label{sec:implementation}

We implemented our 
 procedure for verifying  {\DBDDS}s with arithmetic in a tool, called 
 \textsc{LinDMT}.
Given a property and a \DBDDS as input,
the tool 
visualizes the NFA and product construction,
and computes a witness as prescribed by \thmref{model:checking}, if it exists.
The tool interfaces 
CVC5~\cite{DRKBT14} for SMT checks, and QE in LIRA.
On top of that, we implemented the cover computation for EUF with unary functions and arbitrary relations from~\cite[p. 961]{CalvaneseGGMR21}, 
and the $\m{TameCombCover}$ routine \cite[Sec. 8]{CalvaneseGGMR22} to compute covers in tame combinations of EUF and LIRA. 
The tool is written in Python, and accessible via a web interface
(\url{http://18.117.176.211/lindmt}),
where also sources and benchmarks are available.
In input, control states (as $\m{o}_1$, $\m{o}_2$ in \exaref{simple}) are supported for efficiency.

For a preliminary evaluation, we created a set of benchmarks using data-aware processes from the literature, including problems adapted from~\cite{verifas}, excluding updates of DB relations. 
For each DMT, we checked five different properties.
The results are summarized in the table below,
listing whether the problem falls into a decidable class I--IV,
the number of transitions (\textbf{T}), 
total size of transition formulae (\textbf{D}), 
number of relations/functions/constants (\textbf{R}/\textbf{F}/\textbf{C}),
the total/average/maximal time in seconds for the five properties, 
and the total/average/maximal number of SMT checks.

\noindent
\begin{footnotesize}
\begin{tabular}{@{}l@{\ \ }c@{\ \ }c@{\ \ }c@{\ \ }c@{/}c@{/}c@{\:}c@{/}c@{/}c@{\ \ }c@{/}c@{/}c@{}}
 & &\textbf{T} &\textbf{D} & \textbf{R}&\textbf{F}&\textbf{C} & \multicolumn{3}{c}{\textbf{time}} & \multicolumn{3}{c}{\textbf{SMT checks}} \\ \hline
  (P1) & 
    IV & 
    10 & 60 & 0 & 0 & 9 &
    3.47 & 0.69 & 1.23 &
    2702 & 540 & 880
 \\ \hline
  (P2) & 
    IV & 
    44 & 124 & 0 & 0 & 7 &
    14.23 & 2.85 & 3.66 &
    20470 & 4094 & 4285
 \\ \hline
  (P3) & 
    IV & 
    6 & 30 & 0 & 0 & 7 &
    0.95 & 0.19 & 0.40 &
    186 & 37 & 57
 \\ \hline
  (P4) & 
    II & 
    39 & 257 & 0 & 0 & 10 &
    8.14 & 1.63 & 2.32 &
    3811 & 762 & 1342
 \\ \hline
  (P5) & 
    IV & 
    19 & 37 & 0 & 0 & 3 &
    1.71 & 0.34 & 0.71 &
    1130 & 226 & 433
 \\ \hline
  (P6) & 
    IV & 
    20 & 88 & 0 & 0 & 3 &
    5.43 & 1.09 & 2.40 &
    2816 & 563 & 1109
 \\ \hline
  (P7) & 
    II & 
    2 & 9 & 2 & 0 & 1 &
    0.31 & 0.06 & 0.11 &
    127 & 25 & 33
 \\ \hline
  (P8) & 
    IV & 
    8 & 52 & 2 & 1 & 2 &
    1.36 & 0.27 & 0.60 &
    506 & 101 & 239
 \\ \hline
  (P9) & 
    IV & 
    20 & 194 & 3 & 0 & 5 &
    8.18 & 1.64 & 2.91 &
    11363 & 2272 & 4067
 \\ \hline
  (P10) & 
    IV & 
    20 & 159 & 2 & 1 & 3 &
    54.23 & 10.85 & 17.38 &
    17648 & 3529 & 4265
 \\ \hline
  (P11) & 
    IV & 
    19 & 196 & 5 & 0 & 3 &
    3.30 & 0.66 & 0.93 &
    568 & 113 & 288
 \\ \hline
  (P12) & 
    IV & 
    8 & 106 & 2 & 0 & 4 &
    3.03 & 0.61 & 0.79 &
    2319 & 463 & 663
 \\ \hline
  (P13) & 
    IV & 
    21 & 291 & 2 & 0 & 7 &
    13.31 & 2.66 & 2.98 &
    8144 & 1628 & 1829
 \\ \hline
  (P14) & 
    IV & 
    12 & 264 & 1 & 0 & 4 &
    17.43 & 3.49 & 6.26 &
    2858 & 571 & 804
 \\ \hline
  (P15) & 
    IV & 
    16 & 107 & 2 & 6 & 6 &
    2.70 & 0.54 & 0.98 &
    375 & 75 & 171
 \\ \hline
  (P16) & 
    IV & 
    16 & 213 & 3 & 0 & 4 &
    3.11 & 0.62 & 0.75 &
    488 & 97 & 138
 \\ \hline
  (P17) & 
    II & 
    10 & 22 & 2 & 4 & 5 &
    0.38 & 0.08 & 0.13 &
    134 & 26 & 46
 \\ \hline
  (P18) & 
    IV & 
    8 & 35 & 1 & 1 & 7 &
    6.29 & 1.26 & 5.69 &
    560 & 112 & 195
 \\ \hline
\end{tabular}
\end{footnotesize}
All problems fall in one of the decidable classes.
Overall, about 36\% of the computation time is spent on SMT checks, and 29\% on quantifier elimination/cover computation.


\section{Conclusion}
We have tackled linear-time verification for {\plainmydds}s, a very general model of data-aware processes.
Our verification procedure is semi-decision in general, but becomes a decision procedure for several relevant, checkable settings. We have shown novel decidability criteria
especially for processes operating over read-only databases with arithmetic,
generalizing several earlier results.
%
In the future, we plan to study other 
verification tasks for {\plainmydds}s, such as branching-time model checking and monitoring.
Second, we want to investigate whether our techniques can be applied to evolving, read-write databases. Third, we want to refine our results by considering description logics that admit a model completion ~\cite{Koopmann20,BaaderGL12}.

\bibliography{references}

\appendix
\clearpage
\section{Automata Construction}

For a
\mydds $\BB = \dmttuple$ and verification property $\psi\in \LL_\Sigma$, let $C$ be the set of all constraints in $\psi$.
The construction of $\NFA$ follows earlier work~\cite{dGV13,GMM14,FMW22a}, with the difference that we allow general constraints in the LTL$_f$ property, including existential quantification, and since $\TT$ need not have quantifier elimination, some care must be taken.

The alphabet of $\NFA$ is $\smash{\Theta=2^{C}}$, i.e., a symbol is a subset of constraints in $C$. The construction
also employs an extended alphabet $\smash{\Theta_\last=2^{C \cup \{\last,\neg\last\}}}$, where $\last$ is an auxiliary proposition to mark the last element of a trace.
We call $\varsigma\inn \Theta_\last$ \emph{satisfiable} if 
$\bigwedge\varsigma$ is $\TT$-satisfiable, and
$\{\last, \neg \last \}\not \subseteq \varsigma$.

The NFA construction follows~\cite{dGV13} in that it is defined via a function $\delta$.
The input of $\delta$ is a (quoted) property $\inquotes{\phi}$ for $\phi \in \LL_\Sigma\cup \{\top,\bot\}$ with constraints in $C$; 
the output is a set of pairs
$(\inquotes{\phi'},\varsigma)$ where $\phi' \in \LL_\Sigma \cup \{\top,\bot\}$ is a property with constraints in $C$, and $\varsigma \in \Theta_\last$.

We consider the conjunction and disjunction of  two sets of such tuples $R_1$ and $R_2$, defined as follows:
\begin{align*}
R_1 \ovee R_2 = &\{(\inquotes{\psi_1{\vee}\psi_2}, \varsigma_1 \cup \varsigma_2) \mid (\inquotes{\psi_1}, \varsigma_1) \inn R_1\text{, }\\
&\quad(\inquotes{\psi_2}, \varsigma_2) \inn R_2\text{, and }\varsigma_1\cup \varsigma_2\text{ is satisfiable}\} \\
R_1 \owedge R_2 = &\{(\inquotes{\psi_1{\wedge}\psi_2}, \varsigma_1 \cup \varsigma_2) \mid (\inquotes{\psi_1}, \varsigma_1) \inn R_1\text{, }\\
&\quad(\inquotes{\psi_2}, \varsigma_2) \inn R_2\text{, and }\varsigma_1\cup \varsigma_2\text{ is satisfiable}\}
\end{align*}
where $\psi_1 \vee \psi_2$ and $\psi_1 \wedge \psi_2$ are simplified if possible.
Then the function $\delta$ is defined as follows:\\

\noindent
\begin{tabular}{@{~}l@{~}l}
$\delta(\inquotes{\top})$ &= 
  $\{(\inquotes{\top},\emptyset)\}
  \text{ and }\delta(\inquotes{\bot}) = \{(\inquotes{\bot},\emptyset)\}$\\
$\delta(\inquotes{c})$ &= 
  $\{(\inquotes{\top},\{c\}),(\inquotes{\bot},\emptyset)\} 
  \text{ if $c \in C$}$\\
$\delta(\inquotes{\psi_1 \vee \psi_2})$ &= 
  $\delta(\inquotes{\psi_1}) \ovee \delta(\inquotes{\psi_2})$ \\
$\delta(\inquotes{\psi_1 \wedge \psi_2})$ &= 
  $\delta(\inquotes{\psi_1}) \owedge \delta(\inquotes{\psi_2})$ \\
$\delta(\inquotes{\X \psi})$ &= 
  $\{(\inquotes{\psi},\{\neg \last\}), (\inquotes{\bot}, \{\last\})\}$ \\
$\delta(\inquotes{\G \psi})$ &=
  $\delta(\inquotes{\psi}) \owedge (\delta(\inquotes{\X\G \psi}) \ovee \delta_\lambda)$ \\
$\delta(\inquotes{\psi_1 \U \psi_2})$ &=
  $\delta(\inquotes{\psi_2}) \ovee (\delta(\inquotes{\psi_1})
  \owedge \delta(\inquotes{\X(\psi_1 \U \psi_2)}))$
\end{tabular}\\

\noindent
where $\delta_\lambda=\{(\inquotes{\top},\{ \lambda\}),(\inquotes{\bot},\{\neg\lambda\})\}$.
While the symbol $\last$ is needed for the construction, we can omit it from the NFA,
as done in~\cite{GMM14}, 
and define $\NFA$ as follows:
\begin{definition}
\label{def:NFA}
Given a formula $\psi \inn \LL_\Sigma$, let the NFA 
$\NFA\,{=}\,(Q, \Theta, \varrho, q_0, \{q_f, q_e\})$
be given by the initial state
$q_0\,{=}\,\inquotes{\psi}$, final state
$q_f\,{=}\,\inquotes{\top}$ and an additional final state $q_e$,
and where
$Q$ and $\varrho$ are the smallest sets such that $q_0, q_f, q_e \in Q$ and whenever 
$q\in Q\setminus\{q_e\}$ and $(q', \varsigma)\in \delta(q)$
then $q'\in Q$,
\begin{compactenum}[(i)]
\item if $\last \not\in \varsigma$ then
$(q, \varsigma \setminus\{ \neg \last\}, q') \in \varrho$, and
\item
if $\last \in \varsigma$ and $q' = \inquotes{\top}$ then
$(q, \varsigma \setminus\{\last, \neg \last\}, q_e) \in \varrho$.
\end{compactenum}
\end{definition}

To state correctness, let a word $\langle\varsigma_0,\dots,\varsigma_n\rangle\in \Theta^*$ be \emph{consistent} with a run $(M,\rho)$ for
$\smash{\rho\colon \alpha_0 
\goto{t_1} \alpha_1
\goto{t_2} \dots \goto{t_n} \alpha_n}$ if
$\alpha_i \models_M \bigwedge \varsigma_i$
for all $0\leq i \leq n$.

The proof of the correctness statement about the NFA $\NFA$ (\propref{nfa}) requires some preliminary statements.
First, we define more precise consistency notions.
For a run $(M,\rho)$ with
\begin{equation}
\label{eq:run}
\rho\colon\alpha_0
\goto{t_1} \alpha_1
\goto{t_2} \dots \goto{t_n} \alpha_n
\end{equation}
let $\varsigma \in \Theta$ be \emph{consistent with step} $i$ of $(M,\rho)$
if 
$\alpha_i \models_M \bigwedge \varsigma_i$.
Thus, a word $\varsigma_0 \varsigma_1 \cdots \varsigma_n\in \Theta^*$ is consistent
with $(M,\rho)$ if $\varsigma_i$ is consistent with step $i$ of $(M,\rho)$ for all $i$, $0\,{\leq}\,i\,{\leq}\,n$.
Moreover, let $\varsigma \in \Theta_\last$ be \emph{$\lambda$-consistent} with step $i$ of $(M,\rho)$
if $\varsigma$ is consistent with step $i$ of $\rho$, 
if $i<n$ then $\last \not\in \varsigma$, and 
if $i=n$ then $\neg \last \not\in \varsigma$.

As a first auxiliary result, we note that $\delta$ is total in the sense that
its result set contains an element that is $\lambda$-consistent with any pair of an assignment $\alpha$ and
a run $\sigma$:

\begin{lemma}
\label{lem:delta total}
Let $\psi \in \LL_\Sigma \cup \{\top,\bot\}$ be a property,
$(M,\rho)$ a run such that $\rho$ has length $n$, and $i$ such that $0\leq i \leq n$.
Then there is some $(\inquotes{\psi'}, \varsigma) \in \delta(\inquotes{\psi})$ 
such that 
$\varsigma$ is $\lambda$-consistent with step $i$ of $\rho$.
\end{lemma}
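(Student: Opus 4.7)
My plan is a structural induction on the property $\psi$. The goal is to exhibit, for each $\psi$, an element of $\delta(\inquotes{\psi})$ whose associated $\varsigma \in \Theta_\last$ is consistent with step $i$ of $\rho$ (i.e.\ $\alpha_i \models_M \bigwedge\varsigma$) and respects the $\last$-marking convention (no $\last$ if $i<n$, no $\neg\last$ if $i=n$). The definition of $\delta$ recurses on the syntactic structure of $\psi$, and although the clauses for $\G$ and $\U$ mention formulas like $\X\G\psi$ and $\X(\psi_1\U\psi_2)$, those subcalls unfold in one step using the $\X$-clause and return fixed pairs, so the structural induction goes through.

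For the base cases, $\delta(\inquotes{\top})=\{(\inquotes{\top},\emptyset)\}$ and $\delta(\inquotes{\bot})=\{(\inquotes{\bot},\emptyset)\}$: the empty set is trivially consistent and contains neither $\last$ nor $\neg\last$, so it is $\lambda$-consistent at step $i$. For $\psi = c$ with $c\inn C$, we have $\delta(\inquotes{c})=\{(\inquotes{\top},\{c\}),(\inquotes{\bot},\emptyset)\}$; we pick the first if $\alpha_i\models_M c$, otherwise the second — in either case the chosen $\varsigma$ contains no $\last$-literal, hence is $\lambda$-consistent.

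For the inductive cases, fix an element from the induction hypothesis for each immediate subformula and combine. For $\psi_1\wedge\psi_2$ and $\psi_1\vee\psi_2$, take the $\lambda$-consistent pairs $(\inquotes{\psi_j'},\varsigma_j)\inn \delta(\inquotes{\psi_j})$ given by the IH; then $\varsigma_1\cup\varsigma_2$ is satisfied by $\alpha_i$ and neither contains a conflicting $\last$-literal, so the pair survives the $\owedge$/$\ovee$ combination and remains $\lambda$-consistent. For $\psi=\X\psi_0$, $\delta(\inquotes{\X\psi_0})=\{(\inquotes{\psi_0},\{\neg\last\}),(\inquotes{\bot},\{\last\})\}$: if $i<n$ choose the first, if $i=n$ the second. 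For $\psi=\G\psi_0$, first note that $\delta(\inquotes{\X\G\psi_0})\ovee \delta_\last$ contains, up to simplification, the pairs $(\inquotes{\G\psi_0},\{\neg\last\})$ and $(\inquotes{\top},\{\last\})$ (the crossed combinations fail the satisfiability check because they contain both $\last$ and $\neg\last$). The IH provides a $\lambda$-consistent $(\inquotes{\psi_0'},\varsigma_0)\inn\delta(\inquotes{\psi_0})$; if $i<n$ we combine it with $(\inquotes{\G\psi_0},\{\neg\last\})$, and if $i=n$ with $(\inquotes{\top},\{\last\})$. In both cases the union is satisfied by $\alpha_i$ and consistent with the $\last$-marking. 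The case $\psi=\psi_1\U\psi_2$ is analogous, combining the IH pair for $\psi_2$ (or for $\psi_1$ together with $\delta(\inquotes{\X(\psi_1\U\psi_2)})$) with the appropriate $\last$-literal based on whether $i<n$ or $i=n$.

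The only potentially delicate point is verifying, in the $\ovee$/$\owedge$ combinations, that the union of the two $\varsigma$'s is satisfiable in the sense of the definition — namely, that $\bigwedge(\varsigma_1\cup\varsigma_2)$ is $\TT$-satisfiable and does not contain both $\last$ and $\neg\last$. The first requirement is immediate because $\alpha_i$ itself witnesses satisfaction of both $\bigwedge\varsigma_1$ and $\bigwedge\varsigma_2$ by the IH. The second requirement is ensured by the uniform choice dictated by $i$: in every inductive step we consistently pick the branch marked $\{\neg\last\}$ when $i<n$ and the branch marked $\{\last\}$ when $i=n$, so no conflicting $\last$-literals accumulate. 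This observation is the technical heart of the argument and makes the induction close cleanly.
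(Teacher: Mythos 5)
Your proof is correct and follows essentially the same route as the paper's: structural induction on $\psi$, with the same choices in the base and $\X$ cases, and the same observation that the $\ovee$/$\owedge$ combinations survive the satisfiability filter because $\alpha_i$ witnesses both conjuncts and the $\last$-literals are chosen uniformly according to whether $i<n$ or $i=n$. You spell out the $\G$/$\U$ cases and the well-foundedness of the induction (the one-step unfolding of $\X\G\psi$ and $\X(\psi_1\U\psi_2)$) more explicitly than the paper, which simply defers these to the definitions of $\ovee$ and $\owedge$, but the substance is identical.
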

\begin{proof}
Let $\rho$ be of the form \eqref{eq:run}.
By structural induction on $\psi$.
If $\psi$ is $\top$ or $\bot$ then we must have $\varsigma=\emptyset$, which is trivially $\lambda$-consistent with any step.
If $\psi$ is a constraint $c$, then we can take $(\inquotes{\top}, \{c\}) \in \delta(\inquotes{c})$ if $\alpha_i \models_M c$, and $(\inquotes{\bot}, \emptyset)$ otherwise.
If $\psi$ is of the form $\X \psi'$ then $(\inquotes{\psi'}, \{\neg \lambda\}) \in \delta(\inquotes{\psi})$ is $\lambda$-consistent if $i<n$, and
$(\inquotes{\bot}, \{\lambda\})$ otherwise.
In all other cases the claim follows from the definitions of $\ovee$ and $\owedge$ and the induction hypothesis.
\end{proof}

Next, one can show that $\delta$ preserves and reflects whether a run satisfies a property in $\LL_\Sigma$.
Both directions are proven by straightforward induction proofs on the property
structure, exactly like \cite[Lem. A.3]{FMW22a}.

\begin{lemma}
\label{lem:delta}
Let $\varphi \in \LL_\Sigma \cup \{\top,\bot\}$,
$(M,\rho)$ be a run with $\rho$ of the form \eqref{eq:run}, and $0\,{\leq}\,i\,{\leq}\,n$.
Then
$\rho,i \models_M \varphi$ if and only if
there is some $(\inquotes{\varphi'}, \varsigma)\in \delta(\inquotes{\varphi})$ such that\\
\noindent
\begin{tabular}{@{\ }r@{\ }p{8cm}}
$(a)$ & $\varsigma$ is $\lambda$-consistent with step $i$ of $\rho$, \\
$(b)$ &either $i<n$ and $\rho,i{+}1 \models_M \varphi'$, or $i\,{=}\,n$ and $\varphi' = \top$.
\end{tabular}
\end{lemma}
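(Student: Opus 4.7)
The plan is a structural induction on $\varphi$, establishing both directions of the biconditional simultaneously in each case. For the base cases $\varphi \in \{\top, \bot\}$ the claim follows by inspection of $\delta$: for $\varphi = \top$ the unique pair $(\inquotes{\top}, \emptyset)$ satisfies (a) vacuously (since $\last, \neg\last \notin \emptyset$) and (b) at both $i=n$ and $i<n$; for $\varphi = \bot$ neither side can hold, as $\rho, i \not\models_M \bot$ and the only pair has $\varphi' = \bot$. For $\varphi = c \in C$, whether $\alpha_i \models_M c$ selects between $(\inquotes{\top}, \{c\})$ and $(\inquotes{\bot}, \emptyset)$, and $\lambda$-consistency of $\{c\}$ at step $i$ is exactly $\alpha_i \models_M c$.

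For the propositional cases $\varphi = \psi_1 \wedge \psi_2$ and $\varphi = \psi_1 \vee \psi_2$, the definitions of $\owedge$ and $\ovee$ give a direct correspondence: every pair in $\delta(\inquotes{\varphi})$ is built from pairs in $\delta(\inquotes{\psi_1})$ and $\delta(\inquotes{\psi_2})$ by unioning the $\varsigma$-components and propositionally combining the target formulas (with trivial simplifications). In the $(\Rightarrow)$ direction, I would apply the IH to each subformula to obtain witness pairs whose $\varsigma$-union remains $\lambda$-consistent with step $i$, since both are consistent with the same assignment $\alpha_i$ and the union is therefore $\TT$-satisfiable; combining them via $\owedge$ or $\ovee$ yields the required pair. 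The $(\Leftarrow)$ direction projects a pair in $\delta(\inquotes{\varphi})$ onto its factors and invokes the IH. For $\varphi = \X\psi$, the two pairs in $\delta(\inquotes{\X\psi})$ are $(\inquotes{\psi}, \{\neg\last\})$ and $(\inquotes{\bot}, \{\last\})$; only the first is $\lambda$-consistent with step $i$ when $i<n$, only the second when $i=n$, and the latter violates (b). Thus the RHS holds iff $i<n$ and $\rho, i{+}1 \models_M \psi$, which is precisely the semantics of $\X$.

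The hard part will be $\varphi = \G\psi$ and symmetrically $\varphi = \psi_1 \U \psi_2$, because of the recursive shape of $\delta$ on these operators. My approach is to first explicitly simplify $\delta(\inquotes{\X\G\psi}) \ovee \delta_\lambda$ to $\{(\inquotes{\G\psi}, \{\neg\last\}), (\inquotes{\top}, \{\last\})\}$, so that every pair in $\delta(\inquotes{\G\psi})$ has one of two forms: $(\inquotes{\psi' \wedge \G\psi}, \varsigma \cup \{\neg\last\})$ (usable when $i<n$) or $(\inquotes{\psi'}, \varsigma \cup \{\last\})$ (usable when $i=n$), where $(\inquotes{\psi'}, \varsigma) \in \delta(\inquotes{\psi})$. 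Applying the IH to $\psi$ at step $i$ then reduces both sides of the biconditional to the unfolded semantics of $\G$, namely $\rho, i \models_M \psi$ together with either $i=n$ or $\rho, i{+}1 \models_M \G\psi$. The key subtlety is that no nested induction on $\G\psi$ itself is required: the target $\G\psi$ appears only as the $\varphi'$-component of a pair, which condition (b) relates to the semantics at $i{+}1$ directly, not via the IH. The $\U$ case is handled analogously after a corresponding unfolding of $\delta(\inquotes{\psi_1 \U \psi_2})$ into pairs coming from $\delta(\inquotes{\psi_2})$ and from $\delta(\inquotes{\psi_1}) \owedge \delta(\inquotes{\X(\psi_1 \U \psi_2)})$, mirroring the two disjuncts in the semantics of $\U$.
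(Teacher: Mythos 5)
Your proposal is correct and takes essentially the same route as the paper, which proves this lemma by the very same structural induction on the property (deferring details to the analogous Lem.~A.3 of \cite{FMW22a}); in particular, your explicit unfolding of $\delta(\inquotes{\G\psi})$ and $\delta(\inquotes{\psi_1\U\psi_2})$ and your observation that the recursive occurrences of $\G\psi$ and $\psi_1\U\psi_2$ only appear as $\varphi'$-components handled by condition~$(b)$, not by the induction hypothesis, is exactly the right justification for well-foundedness. One small repair: in the $(\Rightarrow)$ direction for $\psi_1\vee\psi_2$ when only one disjunct holds at step $i$, the induction hypothesis yields nothing for the other disjunct (its right-hand side is false), so to obtain a $\lambda$-consistent pair for that factor--needed for the $\ovee$-combination to exist and stay $\lambda$-consistent--you must invoke the totality of $\delta$ (\lemref{delta total}) rather than the IH.
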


Let a word $\varsigma_0 \varsigma_1 \cdots \varsigma_{n}\in \Theta_\last^*$ be \emph{well-formed}
if $\last \not\in \varsigma_i$ for all $0\leq i < n$, and $\neg \last \not\in \varsigma_n$.
Then \lemref{delta} can be used to show that a sequence of $\delta$ applications corresponds to a word $w$ that leads to the accepting state $\inquotes{\top}$ if and only if runs consistent with $w$ satisfy the property.
The proof is by induction on the length of $w$ in one direction, and inductively constructing $w$ using \lemref{delta} in the other; exactly like~\cite[Lem. A.4]{FMW22a}.

\begin{lemma}
\label{lem:deltastar}
$\rho \models_M \psi$ holds iff
there exist a well-formed word $w=\varsigma_1 \cdots \varsigma_{n}\in \Theta_\last^*$ that is consistent with a run $(M,\rho)$ and states 
$q_0, \dots, q_n \in Q$ such that $q_0=\inquotes{\psi}$, $q_n = \inquotes{\top}$,
and $(q_{i+1},\varsigma_{i+1}) \in \delta(q_i)$ for all $0\leq i < n$.
\end{lemma}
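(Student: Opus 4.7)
The plan is to prove both directions by induction on the length of the run $\rho$, using \lemref{delta} as the single-step engine that connects $\delta$-applications to satisfaction of $\LL_\Sigma$ properties. The forward direction builds the word and state sequence top-down from $\inquotes{\psi}$, the backward direction peels off one step at a time starting from $q_0$.

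For the forward direction ($\Rightarrow$), assume $\rho \models_M \psi$, i.e.\ $\rho,0 \models_M \psi$ for $\rho$ of length $n$. I would set $q_0 := \inquotes{\psi}$ and proceed inductively: given $q_i \in Q\setminus\{\top\}$ with $\rho,i \models_M q_i$, apply \lemref{delta} to obtain some $(\inquotes{\varphi'},\varsigma) \in \delta(q_i)$ where $\varsigma$ is $\lambda$-consistent with step $i$, and either $i<n$ with $\rho,i{+}1 \models_M \varphi'$, or $i=n$ with $\varphi' = \top$. Set $q_{i+1} := \inquotes{\varphi'}$ and $\varsigma_{i+1} := \varsigma$. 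The process terminates with $q_n = \inquotes{\top}$. The well-formedness of $w = \varsigma_1\cdots\varsigma_n$ follows immediately from $\lambda$-consistency at each position (no $\last$ before the end, no $\neg\last$ at the end), and consistency of $w$ with $(M,\rho)$ follows from the consistency clause of $\lambda$-consistency.

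For the backward direction ($\Leftarrow$), I would proceed by induction on $n$. In the base case $n=0$, we have $q_0 = \inquotes{\psi} = \inquotes{\top}$, so $\psi = \top$ and $\rho \models_M \top$ holds trivially. For the inductive step, from the hypothesis for $n$ we have $(q_1,\varsigma_1) \in \delta(q_0)$ with $\varsigma_1$ $\lambda$-consistent with step $0$, plus a shorter sequence $q_1, \dots, q_n$ witnessing (via the induction hypothesis, applied to the suffix run starting at $\alpha_1$) that $\rho,1 \models_M q_1$ when the run has more than one state; in the case the run has length $0$, well-formedness forces $\lambda \in \varsigma_1$ and then $q_1 = \inquotes{\top}$ matches \lemref{delta}'s condition $i=n$, $\varphi'=\top$. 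Applying \lemref{delta} in the converse direction to $(q_1,\varsigma_1) \in \delta(q_0)$ yields $\rho,0 \models_M \psi$.

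The main obstacle is bookkeeping around the auxiliary symbol $\last$. In \lemref{delta}, $\lambda$-consistency forces $\last$ to appear in the $n$-th symbol and $\neg\last$ in earlier symbols; this is exactly what \emph{well-formed} means here, so the translation between the two conditions must be handled carefully but is essentially definitional. A related subtlety is that the statement here uses the NFA alphabet $\Theta$ rather than $\Theta_\last$ and drops $\last$/$\neg\last$ from edge labels (as in \defref{NFA}), so one must check that the well-formed word built by the induction is consistent with $(M,\rho)$ in the sense of $\Theta$-consistency after erasure of $\last$ markers. Apart from these points, the argument is a routine induction mirroring the structure of \cite[Lem.~A.4]{FMW22a}, to which the excerpt explicitly defers.
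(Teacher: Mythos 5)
Your proposal is correct and follows essentially the same route as the paper, which likewise proves the forward direction by inductively constructing $w$ via \lemref{delta} and the backward direction by induction on the length of $w$ (deferring details to \cite[Lem.~A.4]{FMW22a}). One small inaccuracy: in your backward base case, well-formedness does not force $\last \in \varsigma_1$ but only excludes $\neg\last$ from the final symbol --- which is all that $\lambda$-consistency at position $i=n$ requires anyway, so the argument is unaffected.
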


Before proving the main result about $\NFA$, the following technical observations are needed to show that $\lambda$ can be omitted in the NFA (cf. \cite[Lem. A.5]{FMW22a}).
Since $\TT$-equivalence causes small changes in the proof, it is given below.

\begin{lemma}
\label{lem:delta:last}
Let $\psi \in \LL_\Sigma \cup \{\top, \bot\}$ and
$(\inquotes{\top},\varsigma) \in \delta(\inquotes{\psi})$ such that $\varsigma$ is $\lambda$-consistent with step $i$ of some run $(M,\rho)$.
\begin{compactenum}
\item[(1)] If $\neg \last \in \varsigma$, there is some $(\inquotes{\top},\varsigma') \in \delta(\inquotes{\psi})$
such that $\neg \last\not\in \varsigma'$ and $\varsigma'$ is consistent with step $i$ of $(M,\rho)$ as well.
\item[(2)] If $\last \in \varsigma$, there is some $(\inquotes{\top},\varsigma') \in \delta(\inquotes{\psi})$
such that $\last\not\in \varsigma'$ and $\varsigma'$ is consistent with step $i$ of $(M,\rho)$ as well.
\end{compactenum}
\end{lemma}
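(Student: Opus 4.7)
The plan is structural induction on $\psi \in \LL_\Sigma \cup \{\top, \bot\}$, following the recursive definition of $\delta$ in the appendix. The base cases ($\psi \eqn \top$, $\psi \eqn \bot$, and $\psi$ a constraint $c$) are immediate: the only $(\inquotes{\top}, \varsigma)$-tuples produced have $\varsigma = \emptyset$ or $\varsigma = \{c\} \subseteq C$, so neither $\last$ nor $\neg \last$ ever lies in $\varsigma$, and the antecedents of (1) and (2) are vacuously false.

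For $\psi_1 \vee \psi_2$ and $\psi_1 \wedge \psi_2$ I would unfold the definitions of $\ovee$ and $\owedge$: a tuple $(\inquotes{\top}, \varsigma_1 \cup \varsigma_2) \in \delta(\psi)$ must come from $(\inquotes{\psi_1'}, \varsigma_1) \in \delta(\psi_1)$ and $(\inquotes{\psi_2'}, \varsigma_2) \in \delta(\psi_2)$ whose simplified combination equals $\inquotes{\top}$, which forces at least one $\psi_j'$ to already be $\inquotes{\top}$. Whichever side $\varsigma_j$ contains the offending $\last$-literal, the inductive hypothesis on $\psi_j$ supplies a replacement tuple $(\inquotes{\top}, \varsigma_j')$ without that literal; recombining via $\ovee$ (resp.\ $\owedge$) yields the required $(\inquotes{\top}, \varsigma') \in \delta(\psi)$. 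Consistency with step $i$ carries over because only $\last$-markers are altered while the $C$-literals in $\varsigma$ are unchanged, and consistency is defined exclusively on the $C$-part.

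For $\psi = \X \psi'$, $\psi = \G \psi'$ and $\psi = \psi_1 \U \psi_2$ the $\last$-markers are first introduced, via $\delta(\inquotes{\X\cdot})$ and $\delta_\last$. The structural key is that $\delta_\last = \{(\inquotes{\top},\{\last\}), (\inquotes{\bot},\{\neg \last\})\}$ and $\delta(\inquotes{\X \phi}) = \{(\phi, \{\neg \last\}), (\inquotes{\bot}, \{\last\})\}$ are mirror images: when forming $\delta(\inquotes{\X \phi}) \ovee \delta_\last$, exactly two pairings survive the satisfiability filter (the other two introduce both $\last$ and $\neg \last$ and are discarded), one yielding a $\top$-tuple labelled $\{\last\}$ and one yielding a $\top$-tuple with $\neg \last$ inside. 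Thus, when a marker sits in the ``temporal'' branch of $\delta(\G \psi')$ or $\delta(\psi_1 \U \psi_2)$, swapping between these two pairings supplies the dual tuple the lemma requires; the $C$-literals are pulled entirely from the $\delta(\psi')$/$\delta(\psi_2)$ branch, so consistency at step $i$ is preserved, and the inductive hypothesis on $\psi'$ (resp.\ $\psi_1, \psi_2$) handles any marker already inside that side.

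The main obstacle I expect is the bookkeeping for $\U$, whose definition nests a $\ovee$ of $\delta(\psi_2)$ with a $\owedge$ involving $\delta(\psi_1)$ and $\delta(\inquotes{\X (\psi_1 \U \psi_2)})$: the $\last$-marker can originate in any of three sub-$\delta$'s, and I would split into the corresponding subcases and apply the swap-and-recombine strategy uniformly, verifying in each that the unsatisfiability filter (which forbids $\{\last,\neg\last\} \subseteq \varsigma$) is still respected. Because (1) and (2) are dual under exchanging $\last$ and $\neg \last$ in the two branches of $\delta(\inquotes{\X\cdot})$ and $\delta_\last$, a single induction covers both parts.
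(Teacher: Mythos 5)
Your overall scheme---structural induction on $\psi$ following the recursive definition of $\delta$, with vacuous base cases and the real work in the $\ovee$/$\owedge$ clauses---matches the paper's proof. However, there are two genuine gaps. First, in the disjunction case your argument breaks down exactly where the paper needs an extra tool: a $\top$-tuple of $\delta(\inquotes{\psi_1\vee\psi_2})$ arises from $(\inquotes{\chi_1},\varsigma_1)$ and $(\inquotes{\chi_2},\varsigma_2)$ with $\chi_1\vee\chi_2\equiv\top$, which only guarantees that \emph{one} of the $\chi_j$ is $\top$; if the offending $\last$-literal sits in the $\varsigma_j$ of the \emph{other} component, the induction hypothesis is inapplicable, since the lemma only speaks about tuples whose first component is $\inquotes{\top}$. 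The paper closes this hole with the separate totality lemma (\lemref{delta total}), replacing the non-$\top$ component by \emph{some} $\lambda$-consistent tuple $(\inquotes{\chi_2'},\varsigma_2')$ whose first component is irrelevant because $\top\vee\chi_2'$ still simplifies to $\top$. You never invoke that lemma, and without it the recombination step does not go through. (The paper also explicitly rules out the sub-case $\chi_1=\neg\chi_2$, which your ``at least one is $\top$'' claim silently skips.)

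Second, your description of the temporal cases rests on a false computation: in $\delta(\inquotes{\X\phi})\ovee\delta_\lambda$ the two surviving pairings are $(\inquotes{\top},\{\last\})$ and $(\inquotes{\phi},\{\neg\last\})$ --- the latter is \emph{not} a $\top$-tuple, since $\phi$ is $\G\psi'$ or $\psi_1\U\psi_2$. Consequently the proposed ``swap between the two mirror pairings'' does not yield a tuple with first component $\inquotes{\top}$, and the argument for $\G$ and $\U$ collapses. The paper instead treats these cases uniformly as nested $\ovee$/$\owedge$ applications: it checks the claim directly for the operand sets $\delta_\lambda$ and $\delta(\inquotes{\X\cdot})$ (where it is vacuous or immediate because the first components are never $\inquotes{\top}$), and then the generic conjunction/disjunction analysis, again backed by \lemref{delta total}, does the rest. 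You would need to restructure the temporal cases along these lines rather than relying on the claimed symmetry.
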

\begin{proof}
We first observe that the set $\varsigma$ does not contain both $\lambda$ and $\neg\lambda$, which is easy to show by induction on $\psi$: it clearly holds for the base cases, and otherwise it follows from the definition of $\owedge$ and $\ovee$.
\begin{compactenum}
\item[(1)]
We apply induction on $\psi$.
If $\psi$ is $\top$, $\bot$, or a constraint, there is nothing to show because $\neg \lambda \not\in \varsigma$.
If $\psi$ is of the form $\X\psi'$, then by definition of $\LL_\Sigma$,
$\psi'$ is not $\top$. Therefore by definition of $\delta$ the claim  holds.
The claim also holds for the tuples $\{(\inquotes{\top},\{ \lambda\}),(\inquotes{\bot},\{\neg\lambda\})\}$ used in the definition of $\delta(\G\psi')$ for some $\psi'$.
Otherwise, $(\chi,\varsigma)$ must result from $\delta(\inquotes{\psi_1})\ovee \delta(\inquotes{\psi_2})$ or $\delta(\inquotes{\psi_1})\owedge \delta(\inquotes{\psi_2})$, for some $\psi_1$ and $\psi_2$, i.e., there
must be some $(\inquotes{\chi_1},\varsigma_1) \in \delta(\inquotes{\psi_1})$
and
$(\inquotes{\chi_2},\varsigma_2) \in \delta(\inquotes{\psi_2})$ such that
$\chi \equiv_\TT  \chi_1 \wedge \chi_2$ or $\chi \equiv_\TT  \chi_1 \vee \chi_2$
and $\varsigma = \varsigma_1 \cup \varsigma_2$.
As $\neg \lambda \in \varsigma$, it must be that $\neg \lambda \in \varsigma_1$ or $\neg \lambda \in \varsigma_2$. 
Consider first the case of $\chi \equiv_\TT  \chi_1 \wedge \chi_2$, so we can assume that $\chi_1=\chi_2=\top$.
If $\neg \lambda \in \varsigma_1$, by the induction hypothesis, 
there is some $(\inquotes{\top},\varsigma_1') \in \delta(\inquotes{\psi_1})$
such that $\neg \last\not\in \varsigma_1'$ and $\varsigma_1'$ is consistent with step $i$ of $\rho$ as well. Similarly, if $\neg \lambda \in \varsigma_2$ there is some $(\inquotes{\top},\varsigma_2') \in \delta(\inquotes{\psi_2})$
such that $\neg \last\not\in \varsigma_2'$ and $\varsigma_2'$ is consistent with step $i$ of $\rho$ as well. Therefore, $(\inquotes{\top}, \varsigma_1'\cup\varsigma_2') \in \delta(\inquotes{\psi})$ satisfies the claim.

If $\chi =\top \equiv_\TT \chi_1 \vee \chi_2$, then the case that $\chi_1 = \neg \chi_2$ can be excluded because by definition of $\LL_\Sigma$ this is only possible if $\psi_1 \equiv_\TT \neg \psi_2$ are constraints, and in this case 
$\neg\lambda \not\in \varsigma$ cannot hold.
So we can assume w.l.o.g. that $\chi_1 \equiv_\TT  \top$.
By induction hypothesis, there is some $(\inquotes{\top},\varsigma_1') \in \delta(\inquotes{\psi_1})$
such that $\neg \last\not\in \varsigma_1'$ and $\varsigma_1'$ is consistent with step $i$ of $\rho$ as well.
By \lemref{delta total}, there is some 
$(\chi_2',\varsigma_2') \in \delta(\inquotes{\psi_2})$ such that
$\varsigma_2'$ is consistent with step $i$ of $\rho$, and such that $\neg \last \not\in \varsigma_2'$.
Therefore, $(\inquotes{\top}, \varsigma_1'\cup\varsigma_2') \in \delta(\inquotes{\psi})$ satisfies the claim.
\item[(2)]
This case is shown in a similar way as Item (1).
\qedhere
\end{compactenum}
\end{proof}

\noindent
Finally, \propref{nfa} can be shown as follows:

\propositionNFA*
\begin{proof}
\begin{compactenum}
\item[($\Longleftarrow$)]
If $\rho \models_M \psi$ then by \lemref{deltastar} there are a 
well-formed word 
$w = \varsigma_0 \varsigma_1 \cdots \varsigma_{n}$ in $\Theta_\lambda^*$ 
that is consistent with $\rho$
and
$q_0, \dots, q_n \in Q$ such that $q_0=\inquotes{\psi}$, $q_n = \inquotes{\top}$,
and $(q_{i+1},\varsigma_{i+1}) \in \delta(q_i)$ for all $0\leq i < n$.
\defref{NFA} filters the tuples returned by $\delta$ to some extent, but $w$ is still accepted: Indeed, as $w$ is well-formed, $\lambda \not\in\varsigma_i$ for all $i<n$.
So $\NFA$ must admit either a transition sequence
$\inquotes{\psi} \to^*_w \inquotes{\top}$, or
$\inquotes{\psi} \to^*_w q_e$, but in both cases $w$ is accepted because
$\inquotes{\psi}$ is the initial state.
\item[($\Longrightarrow$)]
Let $w = \varsigma_0 \varsigma_1 \cdots \varsigma_{n}\in \Theta^*$ be accepted by $\NFA$
via some transition sequence
$q_0 \goto{\varsigma_0} q_1 \goto{\varsigma_1} \dots \goto{\varsigma_{n}} q_{n+1}$, and such that $w$ is consistent with $(M,\rho)$.
By \defref{NFA}, there are $\varsigma_i'$, 
such that 
$\varsigma_i = \varsigma_i' \setminus \{\last, \neg \last\}$
for all $i$, $0\leq i \leq n$,
 $q_{i+1} \in \delta(q_i, \varsigma_i')$ for all $0\leq i < n$, and $\inquotes{\top} \in \delta(q_n, \varsigma_n')$.
Since $\varsigma_i$ and $\varsigma_i'$ differ only by $\lambda$ and $\neg\lambda$, the word $w' = \varsigma_0' \varsigma_1' \cdots \varsigma_{n}' \in \Theta_\lambda^*$
is consistent with $(M,\rho)$ as well.
We can moreover assume that $w'$ is well-formed:
First, by \defref{NFA}, if $\lambda \in \varsigma_i'$ for some $i<n$ then it must be $q_{i+1}=\inquotes{\top}$ by \defref{NFA}. By \lemref{delta:last} (2), there is some $\varsigma_i''$ without $\lambda$ that is consistent with step $i$ of $(M, \rho)$ as well and 
such that $q_{i+1} \in \delta(q_i, \varsigma_i'')$.
Second, if $\neg \lambda \in \varsigma_n'$ then by \lemref{delta:last} (1) there is some $\varsigma_n''$ without $\neg \lambda$ that is consistent with step $n$ of $(M, \rho)$ as well and 
such that $q_{n+1} \in \delta(q_n, \varsigma_n'')$.
Thus also the word $w'$ leads to $\top$, and $w'$ is consistent with $(M,\rho)$.
According to \lemref{deltastar}, $\rho \models_M \psi$.
\qedhere
\end{compactenum}
\end{proof}

\begin{example}
\label{exa:nfa2}
We detail the construction sketched in \exaref{nfa}, where
$\psi = (x\geq 0) \U (s\eqn \m s_2 \wedge x\eqn4)$. Let $c_0:= (x\geq 0)$, $c_1 := (x\eqn 4)$ and $c_2 := (s\eqn \m s_2)$. We have
\[
\delta(\inquotes{\psi}) = 
\delta(\inquotes{c_1 \wedge c_2}) \ovee (\delta(\inquotes{c_0}) \owedge \delta(\inquotes{\X \psi})) 
\]
Thus we compute
\begin{align*}
\delta(\inquotes{c_1 \wedge c_2}) &=\{(\inquotes{\top},\{c_1, c_2\}),(\inquotes{\bot},\emptyset)\} \\
\delta(\inquotes{c_0}) &= \{(\inquotes{\top},\{c_0\}),(\inquotes{\bot},\emptyset)\}\\
\delta(\inquotes{\X \psi})) &=\{(\inquotes{\psi},\{\neg \last\}), (\inquotes{\bot}, \{\last\})\} \\
\delta(\inquotes{c_0}) \owedge \delta(\inquotes{\X \psi}) =
\{&
(\inquotes{\psi},\{c_0, \neg  \last\}),
(\inquotes{\bot},\{c_0, \last\}),\\
&(\inquotes{\bot},\{ \neg \last\}),
(\inquotes{\bot},\{ \last\})
\} 
\end{align*}
so that
\begin{align*}
\delta(\inquotes{\psi}) = \{&
(\inquotes{\top},\{c_0, c_1, c_2, \neg  \last\}),
(\inquotes{\top},\{c_0, c_1, c_2, \last\}),\\
&(\inquotes{\top},\{ c_1, c_2,\neg \last\}),
(\inquotes{\top},\{ c_1, c_2,\last\}) \\
 &(\inquotes{\psi},\{c_0, \neg  \last\}),
(\inquotes{\bot},\{c_0, \last\}),\\
&(\inquotes{\bot},\{ \neg \last\}),
(\inquotes{\bot},\{ \last\})
\end{align*}
Moreover, $\delta(\inquotes{\top}) =\{(\inquotes{\top}, \emptyset)\}$ and 
$\delta(\inquotes{\bot}) =\{(\inquotes{\bot}, \emptyset)\}$.
For every tuple returned by $\delta$ we draw an edge in the NFA, but
by \defref{NFA}, the $\lambda$'s are dropped. So $\NFA$ is as follows:

{\centering
\tikz[node distance=35mm]{
\tikzstyle{formula}=[scale=.75, rectangle, rounded corners=2pt, inner sep=2pt, draw]
\tikzstyle{edge}=[scale=.7]
\node[formula] (psi) {${\psi}$};
\node[formula, left of=psi] (bot) {${\bot}$};
\node[formula, right of=psi, double] (top) {${\top}$};
\draw[->] ($(psi) + (-.3,.3)$) to (psi);
\draw[->] (psi) to 
 node[edge, above, yshift=0pt]{$\{x\,{\geq}\,0, x\eqn 4, s\eqn \m s_2\}$}
 node[edge, above, yshift=4mm]{$\{x\eqn 4, s\eqn \m s_2\}$}
 (top);
\draw[->] (psi) to 
node[edge, above, yshift=4mm]{$\emptyset$} 
node[edge, above, yshift=0pt]{$\{c_0\}$}
(bot);
\draw[->] (psi) to (bot);
\draw[->] (top) to[loop right, looseness=6] node[edge, right]{$\emptyset$} (top);
\draw[->] (bot) to[loop left, looseness=6] node[edge, left]{$\emptyset$} (bot);
\draw[->] (psi) to[loop, out=190,in=220, looseness=7] node[edge, left, yshift=-1mm]{$\{x\,{\geq}\,0\}$}(psi);
}\par

}
We can simplify this automaton for our purposes, removing all transitions labeled $\varsigma$ such that there is another transition labeled $\varsigma' \subset \varsigma$ between the same states, since any run that is consistent with the former is also consistent with the latter.
Moreover, we can omit the sink state $\bot$, from which the accepting state is unreachable.
We thus obtain the following NFA:

{\centering
\tikz[node distance=35mm]{
\tikzstyle{formula}=[scale=.75, rectangle, rounded corners=2pt, inner sep=2pt, draw]
\tikzstyle{edge}=[scale=.7]
\node[formula] (psi) {${\psi}$};
\node[formula, right of=psi, double] (top) {${\top}$};
\draw[->] ($(psi) + (-.3,.3)$) to (psi);
\draw[->] (psi) to 
 node[edge, above, yshift=0pt]{$\{x\eqn 4, s\eqn \m s_2\}$}
 (top);
\draw[->] (top) to[loop right, looseness=6] node[edge, right]{$\emptyset$} (top);
\draw[->] (psi) to[loop, out=190,in=220, looseness=7] node[edge, left, yshift=-1mm]{$\{x\,{\geq}\,0\}$}(psi);
}\par

}
\end{example}

\section{Model Checking}

\abstraction*
\begin{proof}
\begin{compactenum}[(1)]
\item
Let $(M,\rho)$ be a witness for $\psi$, for $\rho$ of the form
$\rho \colon \alpha_0
\goto{a_1} \alpha_1
\goto{a_2} \dots
\goto{a_n} \alpha_n$.
By \propref{nfa} there is a word $w = \langle \varsigma_0, \dots, \varsigma_n\rangle$ accepted by $\NFA$
such that $(M,\alpha_i) \models \varsigma_i$ for all $0 \leq i \leq n$.
We show by induction on $i$ that $(M,\nu(\rho|_i))$ satisfies $H(\sigma(\rho|_i),w|_{i+1})$ where $\rho|_i$ is the $i$-step prefix of $\rho$ and $w|_i$ is the prefix of $w$ of length $i$.
For $i=0$ we have $H(\sigma(\rho|_0),w|_{1}) = \bigwedge_{v\in V} v_0{=}I(v) \wedge \varsigma_0(\vec V_0)$,
a formula with free variables $V_0$.
As  $\alpha_0 \models_M \varsigma_0$ by assumption, and $\nu(v_0) = \alpha_0(v)$ for all $v\in V$, the claim holds.
In the inductive step, $H(\sigma(\rho|_{i+1}),w|_{i+2}) = H(\sigma(\rho|_i),w|_{i+1}) \wedge \trans{a_{i+1}}(\vec V_i, \vec V_{i+1}) \wedge \constr(\varsigma_{i+1})(\vec V_{i+1})$.
As $\rho$ is a run, the transition assignment $\beta$ given by $\beta(v^r) = \alpha_i(v)$ and $\beta(v^w) = \alpha_{i+1}(v)$ satisfies $t_{i+1}$, and for all $v\notin \writ(a_{i+1})$, we must have $\alpha_i(v) = \alpha_{i+1}(v)$ for all $v\in V$.
As by definition $\nu(v_i) = \alpha_i(v)$ and $\nu(v_{i+1}) = \alpha_{i+1}(v)$,
$(M,\nu(\rho|_{i+1}))$ satisfies $H(\sigma(\rho|_{i+1}),w|_{i+2})$, which concludes the induction proof. The claim follows from the case $i=n$. 
\item
Suppose there are a transition sequence $\sigma$ compatible with a word $w\in \Theta^*$, a structure $M$ and an assignment $\nu$ such that $(M,\nu)$ satisfies $H(\sigma,w)$.
We show by induction on $i$ that $\rho(M, \nu, \sigma|_i)$ is a run consistent with $w|_{i+1}$.
If $i=0$ then $\sigma$ must be the empty sequence.
As $\nu \models_M H(\sigma,w)$ and $H(\sigma,w)$ contains $\bigwedge_{v\in V} v_0{=}I(v) \wedge \constr(\varsigma_0)(\vec V_0)$, by definition $\rho(M, \nu, \sigma|_0)$ must be the empty run $\alpha_0$ and compatible with $w|_1$ because we must have $\alpha_0 \models_M \constr(\varsigma_0)$.
In the inductive step, we assume that $\rho(M, \nu, \sigma|_i)$ is a run consistent with $w_{i+1}$ and consider the case of $\rho(M, \nu, \sigma|_{i+1})$.
Since $\nu \models_M \trans{a_i}(\vec V_i, \vec V_{i+1})$,
for the assignments $\alpha_i$ and $\alpha_{i+1}$ defined in $\rho(M, \nu, \sigma|_{i+1})$ the transition assignment $\beta$ given by $\beta(v^r) = \alpha_i(v)$ and $\beta(v^w) = \alpha_{i+1}(v)$ satisfies $t_i$, and for all $v\notin \writ(t_i)$ satisfies $t_{i+1}$, and the value of all variables that are not written is propagated by inertia. Thus $\alpha_i \goto{t_{i+1}} \alpha_{i+1}$ is a step, and hence $\rho(M, \nu, \sigma|_{i+1})$ is a run.
It is compatible with $w_{i+2}$ because $\rho(M, \nu, \sigma|_i)$ is consistent with $w|_{i+1}$ and $\alpha_{i+1} \models_M \varsigma_{i+1}$.
This concludes the induction proof.
Finally, it follows from \propref{nfa} that $(M,\rho)$ is a witness.
\qedhere
\end{compactenum}
\end{proof}

The next proof resembles that of \cite[Lem. 4.6]{FMW22a}, with the difference that all satisfiability and equivalence checks are performed in the model completion $\TT^*$.

\productlemma*
\begin{proof}
We prove the following two claims, corresponding to the two directions of the statement:
\begin{compactenum}
\item[(1)]
If $\smash{\NN_\BB^\psi}$ has a path $\pi$ of length $n\geq 1$ to a node $(q,\phi)$, then
$\NN_\psi$ has a transition sequence ending in $q$ labeled $w_\pi$ and
$\BB$ a transition sequence
$\sigma_\pi$ such that
$\phi \equiv_{\TT^*} \hist(\sigma_\pi,w_\pi)$ is $\TT^*$-satisfiable.
\item[(2)]
If $\NN_\psi$ has a transition sequence ending in $q$ labeled $w$ and $\BB$ a
transition sequence $\sigma$ such that
$\hist(\sigma,w)$ is $\TT^*$-satisfiable, then $\smash{\NN_\BB^\psi}$ has a path $\pi$ to $(q,\phi)$ s.t. $w=w_\pi$, $\sigma=\sigma_\pi$, and
$\phi \equiv_{\TT^*} \hist(\sigma,w)$.
\end{compactenum}
Throughout the proof we use the fact that a history constraint $\hist(\sigma,w)$ is $\TT$-satisfiable iff it is $\TT^*$-satisfiable since it is an existential formula,
cf. Rem.~\ref{rem:TT*}.
\begin{compactenum}[(1)]
\item
The proof is by induction on the length of the path $\pi$ in $\smash{\NN_\BB^\psi}$.
In the base case, $\pi$ consists only of a single step.
Then $\sigma_\pi$ is the empty transition sequence, and $w_\pi = \langle \varsigma_0\rangle$.
By definition of $\phiinit$ and the initial steps $(i)$ in \defref{product}, $\phi_1 \equiv_{\TT^*}\bigwedge_{v\in V} v{=}I(v) \wedge 
\varsigma_0 = \hist(\sigma_\pi,w_\pi)$, and this formula must be $\TT$-satisfiable. By construction of $\smash{\NN_\BB^\psi}$, there must be a step $q_0\goto{\varsigma_0}q_1$ in $\NFA$.

In the step case, let $\pi$ be of the form
\begin{equation*}
\label{eq:path2}
(q_0, \phiinit) \goto{\top,\varsigma_0} (q_1, \phi_1) \goto{t_1,\varsigma_1} \dots \goto{t_n,\varsigma_n} (q_{n+1},\phi_{n+1})
\end{equation*}
By the induction hypothesis, we assume
that $\NN_\psi$ has a transition sequence ending in $q_{n}$ labeled $w'=\langle \varsigma_0, \dots, \varsigma_{n-1}\rangle$,
$\BB$ a transition sequence
$\sigma' = \langle t_1, \dots, t_{n-1}\rangle$, and
$\varphi_n \equiv_{\TT^*} \hist(\sigma',w')$ is $\TT$-satisfiable.
Since there is the edge 
$(q_{n},\varphi_{n}) \goto{t_n,\varsigma_n} (q_{n+1},\varphi_{n+1})$, 
by \defref{product} there must be a transition
$q_{n} \goto{\varsigma_{n}} q_{n+1}$ in $\NN_\psi$, 
such that $\varphi_{n+1} \equiv_{\TT^*} \update(\varphi_{n}, t_n) \wedge \varsigma_{n}$, and
the formula $\varphi_{n+1}$ is $\TT$-satisfiable.
We thus have
\begin{align*}
\varphi_{n+1} &\equiv_{\TT^*}
\update(\varphi_{n}, t_n) \wedge \varsigma_{n} \\
& \equiv_{\TT^*} \update(\hist(\sigma',w'), t_n) \wedge \varsigma_{n}
= \hist(\sigma_\pi,w_\pi).
\end{align*}
\item 
By induction on the length $n$ of $\sigma$.
If $n\,{=}\,0$ then $\sigma$ is empty and
$w = \langle\varsigma\rangle$ for some $\varsigma \in \Theta$.
By assumption
$\hist(\sigma,w) = \bigwedge_{v\in V} v{=}I(v) \wedge \varsigma$ is $\TT$-satisfiable. 
Thus, by \defref{product} there is an $(i)$-step
$(q_0, \phiinit) \goto{\top, \varsigma} (q_1, \varphi_1)$ and we have
$\varphi_1 \equiv_{\TT^*} \phiinit \wedge \varsigma = \bigwedge_{v\in V} v{=}I(v) \wedge \varsigma$.

In the inductive step, $\sigma=\langle t_1, \dots, t_n\rangle$, and 
$w = \langle\varsigma_0 \cdots \varsigma_n\rangle$
is accepted by $\NN_\psi$ along a transition sequence $q_0 \to_w^* q_{n+1}$, such that 
$\hist(\sigma,w)$ is satisfiable.
By the induction hypothesis, $\NN_\BB^\psi$ has a path $\pi'$ to a node
$p = (q_{n},\varphi_{n})$
such that $\varphi_{n} \equiv_{\TT^*} \hist(\sigma',w')$.
We thus have that
\begin{align*}
\update(\varphi_{n}, t_n) \wedge \varsigma_n &\equiv_{\TT^*} \update(\hist (\sigma',w'), t_n) \wedge \varsigma_n \\
&= \hist(\sigma,w)
\end{align*}
is $\TT$-satisfiable.
By construction
$\NN_\BB^\psi$ must have a node $p' = ( q_{n+1}, \varphi_{n+1})$ 
such that $\varphi_{n+1} \equiv_{\TT^*} \update(\varphi_{n}, t_n) \wedge \varsigma_n$
and an edge $p \goto{t_n, \varsigma_n} p'$ can be added to $\pi'$ to obtain the desired path $\pi$.
\qedhere
\end{compactenum}
\end{proof}

\begin{example}
\label{exa:lists}
\newcommand{\xlist}{l}
\newcommand{\xel}{x}
\newcommand{\nil}{[]}
\newcommand{\consop}{\mathrel{::}}
\newcommand{\car}{\m{hd}}
\newcommand{\cdr}{\m{tl}}
Suppose a (part of a) business process manages tasks by inserting and extracting them in a stack.
It can be modeled as a \mydds over the theory of acyclic lists \cite{Oppen80},
with sorts $\mathit{list}$ and $\mathit{task}$, unary function symbols $\car$, $\cdr$, binary $\consop$ and constant $\nil$. 
As variables, the \plainmydds uses $\xlist$ of type $list$,
$\xel$ of sort $\mathit{task}$, and a status variable $s$ of uninterpreted sort $\mathit{status}$, and values in $\{\m{init}, \m{pop}, \m{push}, \m{end}\}$. 
We assume that $\mathit{status}$  is
written by the parent process to determine the next operation. 
Let $I(s)=\m{init}$, and $I(\xlist)=\nil$ and $I(\xel)=\m a$.
Transitions are as follows, where $u$ is a 
term of type $\emph{list}$, e.g. $u =\m a \consop\m b \consop \m c \consop \nil$, for $\m a$, $\m b$, $\m c$ 
of type $\mathit{task}$:

{\centering
\begin{tabular}{r@{ }l}
$\m{initialize} =$ & $(s^r\eqn\m{init} \land \xlist^w = u \land \exists d.s^w=d)$\\
$\m{finalize} =$ & $(s^r\eqn\m{end})$\\
$\m{do\_push} =$ & $(s^r\eqn\m{push} \land \xlist^w= \xel^w \consop \xlist^r \land  \exists d.s^w=d)$ \\
$\m{do\_pop} =$ & $(s^r\eqn\m{pop} \land \xlist^r \neq \nil \land \xlist^w= \cdr(\xlist^r)\land {}$ \\
&\qquad$\xel^w=\car(\xlist^r) \land \exists d.s^w=d)$ 
\end{tabular}
%
%
\par
}

\noindent
Since $\TT$ has QE~\cite{Mal62},
\thmref{model:checking} applies: e.g. one can check that the process may reach the undesirable state where the list is empty but the next operation is $\text{pop}$, i.e., there is a witness for
$\F (\xlist\,{=}\,\m{nil} \wedge \X (s=\m{pop}))$.
\end{example}

\section{Decidability Criteria}

\paragraph{I. Acyclic signature.}
Below is an example of a \DBDDS over an acyclic signature, which is thus in a decidable class by \thmref{acyclic:SAS}.
\begin{example}
\label{exa:acyclic}
\newcommand{\eps}{\epsilon}
The following example models a collaborative incident management in a software company, which has first- and second-level support, as well as support by developers. 
The example is adapted from the VERIFAS problem set~\cite{verifas}\footnote{\emph{https://github.com/oi02lyl/has-verifier/blob/master/ bpmn/Incident-Management-as-Collaboration.txt}}, omitting updates of relations.
We consider EUF over the signature $\Sigma$ with sorts $\SS = \{\mathit{string}, \mathit{cust\_id}, \mathit{prob\_id}, \mathit{agent\_id}, \mathit{status}\}$.
There are 
$\mathit{string}$ constants $\eps$, $\mathtt{Yes}$, $\mathtt{No}$, $\mathtt{Low}$, $\mathtt{1}$, and $\mathtt{2}$,
constants $\m{start}$, $\m{problemReceived}$, $\m{handling}$, $\m{solved}$, and  $\m{unsolved}$
of type status, and
constant $\m{NULL}$ of sort $\mathit{prob\_id}$ to indicate undefinedness, and the following relations:
$\m{Customer} \subseteq \mathit{cust\_id} \times \mathit{string}$ contains the ids and names of customers;
$\m{ProblemType}\subseteq \mathit{prob\_id} \times \mathit{string}\times \mathit{string}$ contains ids and names of problem types as well as their importance; and
$\m{SupportAgent}\subseteq \mathit{agent\_id} \times \mathit{string}\times \mathit{string}$ contains ids of agents, whether they are developers, and their support level.
In addition, there is a function $\m{agentType}\colon \mathit{agent\_id} \to \mathit{prob\_id}$ mapping the id of an agent to the type of problems they can handle.
Note that the sort graph has only a single edge $\mathit{agent\_id} \to \mathit{prob\_id}$, so the signature is acyclic.

The following variables are used in the \DBDDS: 
$p$ of type $\mathit{string}$, the problem;
$cid$ of type $\mathit{cust\_id}$, the id of the current customer;
$pid$ of type $\mathit{prob\_id}$, the id of the current problem;
$sol$ of type $\mathit{string}$, the solution; and
$s$ of type $\mathit{status}$.
The control structure of the \DBDDS is as follows, with the status being shown as a control state.
\begin{center}
\begin{tikzpicture}[node distance=17mm]
\tikzstyle{state}=[draw, rectangle, rounded corners=1pt, inner sep=3pt, line width=.7pt, scale=.6]
\tikzstyle{action}=[scale=0.6]
\tikzstyle{label}=[scale=0.6]
\node[state] (0)  {$\m{start}$};
\node[state, below of=0] (1) {$\m{problemReceived}$};
\node[state, below of=1] (2) {$\m{handling}$};
\node[state, right of=2, xshift=65mm] (3) {$\m{unsolved}$};
\node[state, below of=2] (4) {$\m{solved}$};
\draw[edge] ($(0) + (-.4,0)$) -- (0);
\draw[edge] (0) to node[label,right] {$\m{getProblemDescription}$} (1);
\draw[edge] (1) to node[label,right] {$\m{level1{:}findAgent}$} (2);
\draw[edge] (2) to node[label,above] {$\m{level1{:}handleProblem}$, $\m{level2{:}handleProblem}$} (3);
\draw[edge] (2) to node[label,right, near end] {$\m{level1{:}solveProblem}$, $\m{level2{:}solveProblem}$, $\m{devel{:}examineProblem}$} (4);
\draw[edge, loop left] (1) to node[label,above, yshift=3mm] {$\m{handleProblem}$} (1);
\draw[edge, bend left=10] (3) to node[label,below] {$\m{level2{:}findAgent}$,$\m{devel{:}findAgent}$} (2);
\draw[edge, bend right=50] (1) to node[label,left] {$\m{explainSolution}$} (4);
\end{tikzpicture}
\end{center}
The transition guards are as follows:\\
\tikz{\node[scale=.8]{$
\begin{array}{@{}r@{\,}l@{}}
\m{getProblemDescription} & =
(\exists n. \m{Customer}(cid^w, n) \wedge p^w \neq \eps \wedge {}\\
&\qquad pid^w \neq \m{NULL} \wedge sol^w = \eps)
\\
\m{handleProblem} &= (\exists n. \m{ProblemType}(pid^r, n, \mathtt{Low}) \wedge sol^w \neq \eps)
\\
\m{explainSolution} &=(sol^r \neq \eps) 
\\
\m{level1{:}findAgent} &= (sol^r  = \eps \wedge \m{SupportAgent}(aid^w, \mathtt{No}, \mathtt{1}) \wedge {}\\&\quad \m{agentType}(aid^w) = pid^r \wedge aid^w \neq \m{NULL})
\\
\m{level1{:}handleProblem} & =(sol^r = \eps \wedge sol^w = \eps) 
\\
\m{level1{:}solveProblem} & =(sol^r = \eps \wedge sol^w \neq \eps) 
\\
\m{level2{:}findAgent} &= (sol^r  = \eps \wedge \m{SupportAgent}(aid^w, \mathtt{No}, \mathtt{2})\wedge {}\\&\quad \m{agentType}(aid^w) = pid^r \wedge aid^w \neq \m{NULL})
\\
\m{level2{:}handleProblem} & =(sol^r = \eps \wedge sol^w = \eps) 
\\
\m{level2{:}solveProblem} & =(sol^r = \eps \wedge sol^w \neq \eps) 
\\
\m{devel{:}findAgent} &=
(\exists l. \m{SupportAgent}(aid^w, \mathtt{Yes},l)\wedge {}\\&\quad \m{agentType}(aid^w) = pid^r \wedge aid^w \neq \m{NULL})
\\
\m{devel{:}examineProblem} &=(pid^r \neq \m{NULL} \wedge sol^w \neq \eps)
\\
\end{array}
$}}
One can e.g. check that if a problem is solved without consulting an agent, then its importance is low, i.e.
there is no witness for
$\F (\m{solved} \wedge aid{=}\m{NULL} \wedge \exists n\: i. (\m{ProblemType}(pid, n, i) \wedge i \neq \mathtt{Low}))$.
\end{example}

\paragraph{II. Acyclic signature and monotonicity constraints.}

\theoremacyclicMC*
\begin{proof}
By~\cite[Sec. 8]{CalvaneseGGMR22}, the procedure $\m{TameCombCover}$ produces a cover for the combined theory, i.e. a quantifier-eliminated formula equivalent in the theory completion $\TT^*$.
This formula is of the form $\phi_1(\vec X) \wedge \phi_2(\vec Y, \vec t(\vec X))$ $(\dagger)$, where $X$ are the EUF variables in $V$, $\phi_1(\vec X)$ is an EUF formula, $\vec t(\vec X)$ is a list of EUF terms, $Y$ are the arithmetic variables in $V$, and $\phi_2(\vec Y, \vec Z)$ is an arithmetic formula  (cf.~\cite[{Eq. ({31})}]{CalvaneseGGMR22}).
We show that finitely many formulas of the form $(\dagger)$ exist, up to equivalence, so that a finite history set exists. 
As $\Sigma$ is acyclic, there are only finitely many non-equivalent formulas for $\phi_1(\vec X)$, and finitely many (say, $k$) terms $\vec t(\vec X)$.
By construction, $\phi_2(\vec Y, \vec Z)$ is the result of a quantifier elimination procedure in the arithmetic theory. Thus, it is
a formula where all atoms are MCs over variables $Y$ and $Z$ and where the set of constants is bounded by $\BB$ and $\psi$, cf. the proofs of \cite[Thm. 5.2]{FMW22a} and \cite[Thm. 4]{FMW22c}.
Since $Y \subseteq V$ is finite, and there are at most $k$ additional variables in $Z$, so by a reasoning as in \cite[Thm. 5.2]{FMW22a} (for MCs) or \cite[Thm. 4]{FMW22c} (for IPCs), the set of possible formulas for $\phi_2(\vec Y, \vec Z)$ is finite up to equivalence. 
Thus $(\BB, \psi)$ has a finite history set.
\end{proof}

\begin{restatable}{theorem}{theoremlocallyfiniteMC}
\label{thm:MC:locally:finite}
Let $\TT$ be a  tame combination of $\TT_{db}$ satisfying the assumptions of \corref{locally:finite}, and LRA.
If $\BB$ is a \DBDDS with arithmetic and $\psi \in \LL_\Sigma$ s.t. all arithmetic constraints in $\BB$ and $\psi$ are MCs, the $\TT$-verification task is decidable.
\end{restatable}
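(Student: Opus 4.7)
The plan is to mirror the proof of Theorem~\ref{thm:MC:acyclic}, replacing the acyclicity hypothesis by the local finiteness of $\TT_{db}$ from Corollary~\ref{cor:locally:finite}, and to conclude via Corollary~\ref{cor:main} by bounding the number of history constraints up to $\TT^*$-equivalence. Since Corollary~\ref{cor:locally:finite} already guarantees that $\TT_{db}$ admits a model completion, and by assumption the combination with LRA is tame, the combined theory $\TT = \TT_{db}\cup\text{LRA}$ inherits a model completion $\TT^*$ via the construction of~\cite{CalvaneseGGMR22}, so Assumption~($\star$) is in place.

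The core step is to invoke the procedure $\m{TameCombCover}$ of~\cite[Sec.~8]{CalvaneseGGMR22}, which under tameness produces, for any existential formula, a cover of the shape $\phi_1(\vec X) \wedge \phi_2(\vec Y, \vec t(\vec X))$, where $\vec X$ collects the EUF-sorted variables of $V$, $\vec Y$ the arithmetic ones, $\phi_1$ is an EUF formula over $\vec X$, $\vec t$ is a tuple of EUF terms over $\vec X$, and $\phi_2$ is a linear arithmetic formula over $\vec Y$ and a fresh copy $\vec Z$ of variables standing for $\vec t$. I would then bound each component separately.

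For the EUF components, the key observation is that $k$-local finiteness of $\TT_{db}$ yields only finitely many $\Sigma$-terms over the finite variable set $\vec X$ up to $\TT_{db}$-equivalence, hence only finitely many non-equivalent candidates for the tuple $\vec t(\vec X)$. Because all EUF atoms of $\phi_1$ are built from these finitely many terms, there are also only finitely many non-equivalent choices for $\phi_1(\vec X)$. For the arithmetic component $\phi_2(\vec Y, \vec Z)$, the MC restriction and quantifier elimination in LRA preserve MC shape, with constants bounded by those occurring in $\BB$ and $\psi$; invoking \cite[Thm.~5.2]{FMW22a} as in the proof of Theorem~\ref{thm:MC:acyclic} yields finiteness of the non-equivalent $\phi_2$. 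Combining these three bounds shows that the set of covers of the form $\phi_1\wedge\phi_2$ is finite up to $\TT^*$-equivalence, so by Lemma~\ref{lem:product} only finitely many formulas can label nodes of $\smash{\NN^\psi_\BB}$, giving a finite data history; Corollary~\ref{cor:main} then yields decidability.

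The main obstacle I expect is verifying that local finiteness really does bound the EUF part on its own: acyclicity in Theorem~\ref{thm:MC:acyclic} ensured this by making the set of all $\Sigma$-terms globally finite, whereas here finiteness is only modulo $\TT_{db}$-equivalence and relative to a fixed variable set. One must therefore be careful that TameCombCover uses only EUF terms over $V$ (possibly enriched with variables introduced during the procedure that still range over the same EUF sorts) and that $\TT_{db}$-equivalence is respected when identifying covers, so that the counting argument goes through. Once this is checked against the explicit form of the procedure in \cite{CalvaneseGGMR22}, the rest of the argument is routine and analogous to Theorem~\ref{thm:MC:acyclic}.
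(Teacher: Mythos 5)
Your proposal matches the paper's proof: the paper likewise reduces Theorem~\ref{thm:MC:locally:finite} to the argument for Theorem~\ref{thm:MC:acyclic}, using local finiteness in place of acyclicity to bound the non-equivalent choices of $\phi_1(\vec X)$ and $\vec t(\vec X)$ in the output of $\m{TameCombCover}$, and the MC quantifier-elimination argument for $\phi_2$. Your additional care about finiteness being only up to $\TT_{db}$-equivalence is a reasonable refinement, but the overall route is the same.
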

\begin{proof}
The proof coincides with that of \thmref{MC:acyclic}, except that one uses local finiteness to obtain that there are finitely many non-equivalent formulas for $\phi_1(\vec X)$ and finitely many terms $\vec t(\vec X)$.
\end{proof}

\paragraph{IV. Bounded lookback.}
The \emph{bounded lookback} property
intuitively captures that a \plainmydds maintains only a bounded amount of information, which ultimately implies decidability~\cite{FMW22a}.
It properly generalizes the \emph{feedback freedom} property~\cite{DDV12}.
Both properties are formally defined via computation graphs~\cite{DDV12}, which we adapt here to a {\DBDDS} $\BB$ with variables $V$, and $\psi \in \LL_\Sigma$. Let $C$ be the set of constraints in $\psi$.

\begin{definition}
Let $\sigma = \langle t_1, \dots, t_n\rangle$ be a transition sequence of $\BB$, and $w= \langle \varsigma_0, \dots, \varsigma_n\rangle\in 2^C$.
The \emph{computation graph} for $\sigma$ and $w$ is an undirected graph
$G_{\sigma,w}$ with node set
$\mc V = \{v_i \mid v\in V\text{, }0\,{\leq}\,i\,{\leq}\,n\}$
and an edge $(x_i, y_j)$ whenever $x_i$ and $y_j$ are in the transitive closure of variable pairs that occur in a common literal of 
$H(\sigma,w)$, for all for $i,j \leq n$.
\end{definition}

For instance, there is an edge from $x_i$ to $y_j$ if $H(\sigma,w)$ contains a literal $x_i < y_j + 3$, $x_i = y_j$, or $\exists z.\: (x_i \eqn z \wedge z \eqn y_j)$.
The subgraph $E_{\sigma,w}$  of $G_{\sigma,w}$ consists of all edges $(x_i, y_j)$ such that $x_i$ and $y_j$ are in the transitive closure of variable pairs that occur in a common \emph{equality} atom of $H(\sigma,w)$.
E.g., the latter two cases above imply an edge $(x_i, y_j)$ in $E_{\sigma,w}$, called \emph{equality edge}.
The graph obtained from $G_{\sigma,w}$ by collapsing equality edges is denoted $[G_{\sigma,w}]$.

\begin{example}
\label{exa:computation:graphs}
Consider the two transition sequences $\sigma_1 = \langle \m{login}, \m{select}, \m{sum\_up},\m{discount}, \m{restart}, \m{select},$ $\m{sum\_up},\m{discount}, \m{ship}\rangle$ for the \DBDDS in \exaref{intro};
and $\sigma_2 = \langle \m{xset}, \m{yset}, \m{xset}, \m{yset}, \m{xset}\rangle$ for the system in \exaref{simple}. For simplicity, let $w = \langle \emptyset, \dots, \emptyset\rangle$. The respective computation graphs are as follows,
 where edges in $E_{\sigma_i,w}$ (i.e., equality edges) are dotted, and all non-equality edges in $G_{\sigma,w}$ solid:
\begin{center}
\begin{tikzpicture}[xscale=.65, yscale=.8]
\node[scale=.7] at (-1,.5) {$s$};
\node[scale=.7] at (-1,.1) {$c$};
\node[scale=.7] at (-1,-.7) {$a$};
\node[scale=.7] at (-1,-1.1) {$t$};
\node[scale=.7] at (-1,-.3) {$vip$};
\node[scale=.7] at (-1,-1.5) {$p_1$};
\node[scale=.7] at (-1,-1.9) {$p_2$};
\node[scale=.7] at (-1,-2.3) {$p_3$};
\node[scale=.7] at (-1,-2.7) {$p_4$};
\node[scale=.7] at (-1,-3.1) {$p_5$};
\foreach \i in {0,1,2,3,4,5,6,7,8,9} {
  \node[scale=.65] at (\i,1) {\i};
  \node[fill, circle, inner sep=0pt, minimum width=1mm] (s\i) at (\i,.5) {};
  \node[fill, circle, inner sep=0pt, minimum width=1mm] (c\i) at (\i,.1) {};
  \node[fill, circle, inner sep=0pt, minimum width=1mm] (a\i) at (\i,-.7) {};
  \node[fill, circle, inner sep=0pt, minimum width=1mm] (t\i) at (\i,-1.1) {};
  \node[fill, circle, inner sep=0pt, minimum width=1mm] (vip\i) at (\i,-.3) {};
  \node[fill, circle, inner sep=0pt, minimum width=1mm] (p1\i) at (\i,-1.5) {};
  \node[fill, circle, inner sep=0pt, minimum width=1mm] (p2\i) at (\i,-1.9) {};
  \node[fill, circle, inner sep=0pt, minimum width=1mm] (p3\i) at (\i,-2.3) {};
  \node[fill, circle, inner sep=0pt, minimum width=1mm] (p4\i) at (\i,-2.7) {};
  \node[fill, circle, inner sep=0pt, minimum width=1mm] (p5\i) at (\i,-3.1) {};
  }
\node[scale=.5] at (.5,1.3) {$\mathsf{login}$\phantom{$\m{pl}$}};
\node[scale=.5] at (1.5,1.3) {$\mathsf{select}$\phantom{$\m{pl}$}};
\node[scale=.5] at (2.5,1.3) {$\mathsf{sum\_up}$\phantom{$\m{pl}$}};
\node[scale=.5] at (3.5,1.3) {$\mathsf{discount}$\phantom{$\m{pl}$}};
\node[scale=.5] at (4.5,1.3) {$\mathsf{restart}$\phantom{$\m{pl}$}};
\node[scale=.5] at (5.5,1.3) {$\mathsf{select}$\phantom{$\m{pl}$}};
\node[scale=.5] at (6.5,1.3) {$\mathsf{sum\_up}$\phantom{$\m{pl}$}};
\node[scale=.5] at (7.5,1.3) {$\mathsf{discount}$\phantom{$\m{pl}$}};
\node[scale=.5] at (8.5,1.3) {$\mathsf{ship}$\phantom{$\m{pl}$}};
\draw (c1) -- (a1) -- (vip1);
\draw[dotted] (p10) -- (p11);
\draw[dotted] (p20) -- (p21);
\draw[dotted] (p30) -- (p31);
\draw[dotted] (p40) -- (p41);
\draw[dotted] (p50) -- (p51);
\draw[dotted] (t0) -- (t1) -- (t2);
\draw[dotted] (c1)  -- (c9);
\draw[dotted] (vip1) -- (vip9);
\draw[dotted] (a1) -- (a9);
\draw (p12) -- (t3);
\draw (p22) -- (t3);
\draw (p32) -- (t3);
\draw (p42) -- (t3);
\draw (p52) -- (t3);
\draw[dotted] (p12) -- (p14);
\draw[dotted] (p22) -- (p24);
\draw[dotted] (p32) -- (p34);
\draw[dotted] (p42) -- (p44);
\draw[dotted] (p52) -- (p54);
\draw (t4) -- (t3);
\draw[dotted] (t5) -- (t6);
\draw (p16) -- (t7);
\draw (p26) -- (t7);
\draw (p36) -- (t7);
\draw (p46) -- (t7);
\draw (p56) -- (t7);
\draw[dotted] (p16) -- (p19);
\draw[dotted] (p26) -- (p29);
\draw[dotted] (p36) -- (p39);
\draw[dotted] (p46) -- (p49);
\draw[dotted] (p56) -- (p59);
\draw (t7) -- (t8);
\draw (a8) -- (t8);
\draw[dotted] (t8) -- (t9);
\end{tikzpicture}
\begin{tikzpicture}[xscale=.65, yscale=.8]
\node[scale=.7] at (-1,.5) {$s$};
\node[scale=.7] at (-1,.1) {$x$};
\node[scale=.7] at (-1,-.3) {$y$};
\foreach \i in {0,1,2,3,4,5} {
  \node[scale=.65] at (\i,1) {\i};
  \node[fill, circle, inner sep=0pt, minimum width=1mm] (s\i) at (\i,.5) {};
  \node[fill, circle, inner sep=0pt, minimum width=1mm] (x\i) at (\i,.1) {};
  \node[fill, circle, inner sep=0pt, minimum width=1mm] (y\i) at (\i,-.3) {};
  }
\node[scale=.5] at (.5,1.3) {$\mathsf{xset}$\phantom{p}};
\node[scale=.5] at (1.5,1.3) {$\mathsf{yset}$\phantom{p}};
\node[scale=.5] at (2.5,1.3) {$\mathsf{xset}$\phantom{p}};
\node[scale=.5] at (3.5,1.3) {$\mathsf{yset}$\phantom{p}};
\node[scale=.5] at (4.5,1.3) {$\mathsf{xset}$\phantom{p}};
\draw[dotted] (x0) -- (x1);
\draw[dotted] (y0) -- (y1);
\draw (x1) -- (y1);
\draw[dotted] (x1) -- (x2);
\draw (x2) -- (x3);
\draw[dotted] (y2) -- (y3);
\draw (x3) -- (y3);
\draw[dotted] (x3) -- (x4);
\draw (x4) -- (x5);
\draw[dotted] (y4) -- (y5);
\draw (x5) -- (y5);
\end{tikzpicture}
\end{center}
The computation graphs thus visualize dependencies between variables along the transition sequences.
Note that after collapsing equality edges, i.e. when considering $[G_{\sigma_i,w}]$, the longest path in the graph for $\sigma_1$  has length 5 (e.g. from $c_1$ to $p_{1,6}$);
while for $\sigma_2$ length 4 (from $x_0$ to $y_5$).
\end{example}

\noindent
We can now define bounded lookback.

\begin{definition}
For any $k\geq 0$, a
\DBDDS $\BB$ has \emph{$k$-bounded lookback} with respect to $\psi$ if for all transition sequences $\sigma$ of $\BB$ and words $w\in 2^C$ such that $H(\sigma,w)$ is $\TT$-satisfiable, all acyclic paths in $[G_{\sigma,w}]$ have length at most $k$.
\end{definition}

For instance, as the graph for $\sigma_1$ in \exaref{computation:graphs} suggests, \exaref{intro} has 5-bounded lookback with respect to all properties $\psi$ that do not compare variables with each other (e.g. $\F (s{=}\m{shipped})$):
indeed, it also becomes obvious when staring at the picture in \exaref{intro} that 
at any point of the process, current values of variables depend on at most five steps in the past. However, the system is not feedback free  as feedback freedom forbids self-referential updates such as in $\m{discount}$.

On the other hand, the transitions $\setx, \sety$ in $\sigma_2$ in \exaref{computation:graphs} can be repeated an arbitrary number of times, obtaining arbitrarily long paths even after collapsing equality edges. Thus \exaref{simple} does not have $k$-bounded lookback, for any $k$.
\smallskip

For a given $k$, it is decidable whether $\BB$ has $k$-bounded lookback~\cite[Sec.~5]{FMW22a}, by enumerating all transition sequences of $\BB$ combined with all sequences of atoms in $\psi$ such that $H(\sigma,w)$ is satisfiable, and checking that no chains of variable dependencies have length more than $k$ (not counting steps where variables are compared by equality, or assigned using the equality operator).
The following proof is similar to that of~\cite[Thm.~5.10]{FMW22a}, and repeated here only for self-containedness.

\theoremboundedlookback*

\begin{proof}
Let $C_{ext}$ be the set of all $\Sigma$-constraints in $\psi$, transitions of $\BB$, and
$\phiinit$.
Let $\vec U_1, \dots, \vec U_k$ be $k$ fresh copies of $\vec V$, and $\Psi$ be the set of $\Sigma$-formulas with free variables $V$, quantifier depth at most $k$ and where all atoms are of the form $c(\vec X)$, where $c\in C_{ext}$ and the variable vector $\vec X$ contains only variables in $\vec U_1, \dots, \vec U_k$ and $V$.
As a set of formulae with bounded quantifier depth over a finite set of atoms instantiated with finitely many variables, $\Psi$ is finite up to equivalence.
We prove that $\Psi$ is a history set.

To that end,
we show that for every history constraint $\hist(\sigma,w)$ of $\BB$ and $\psi$
there is some $\phi \in \Psi$ with $\phi \equiv_{\TT^*} \hist(\sigma,w)$.
The proof is by induction of $\sigma$.
If $\sigma$ is empty,  $\hist(\sigma,w) = \phiinit \wedge c$ for some $c\in C_{ext}$. This formula is quantifier-free, has free variables $V$, and all atoms are taken from the vocabulary $C_{ext}$. Thus $\hist(\sigma,w) \in \Psi$.
Otherwise, let $\sigma = \langle t_1, \dots, t_n\rangle$ for some $n\geq 1$.
If $\hist(\sigma, w)$ (and hence $H(\sigma,w)$) is $\TT$-unsatisfiable (hence $\TT^*$-unsatisfiable), then $\bot \equiv_{\TT^*} \hist(\sigma,w)$ and $\bot\in \Psi$.
Otherwise,
$\hist(\sigma, w) = \update(\hist(\sigma', w'), t_n) \wedge c$ for some $t_n\in T$ and $c\in C_{ext}$, and $w'$, $\sigma'$ the prefixes of $w$ and $\sigma$ without the last symbol/the last transition.
By induction hypothesis there is some $\phi' \in \Psi$ with $\phi' \equiv_{\TT^*} \hist(\sigma', w')$.
We can thus write $\hist(\sigma, w)$ as 
$\phi := \exists \vec U. \phi'(\vec U) \wedge \chi$ 
for some quantifier free formula $\chi$ whose atoms are in $C_{ext}$ and have variables $V$.
Let $[\phi]$ be obtained from $\phi$ by eliminating all equality literals $x=y$ in $\phi$ and uniformly substituting all variables in an equivalence class by a representative.
Since $(\BB,\psi)$ has $k$-bounded lookback, and $[\phi]$ encodes a part of
$[G_{\sigma,w}]$, 
$[\phi]$ is equivalent to some formula $\phi''$ of quantifier depth at most $k$ over the same vocabulary, 
obtained by dropping irrelevant literals and existential quantifiers of variables which are not connected to variables of the last instant in $[G_{\sigma,w}]$, and renaming quantified variables to variables in $U$. So $\phi''$ must be in $\Psi$.
\end{proof}

\thmref{bounded:lookback} strictly generalizes~\cite{DDV12} since bounded lookback generalizes feedback freedom~\cite[Lem. 5.11]{FMW22a}.
Note that in contrast to \thmref{acyclic:SAS}, \thmref{bounded:lookback} imposes no restriction on the signature, only on the control flow.

\end{document}